\def\a{{\tt a}}
\def\b{{\tt b}}
\spnewtheorem{exa}{Example}{\it}{\rm}
\spnewtheorem*{constraints*}{Constraints}{\bfseries}{\rmfamily}
\spnewtheorem{definition}{Definition}{\bfseries}{\rmfamily}
\newcommand{\runs}{\textrm{runs}}
\newcommand{\bwt}{\textrm{bwt}}
\newcommand{\rev}{\textrm{rev}}
\newcommand{\Oh}{{\cal O}}
\newcommand{\conj}{\textrm{conj}}
\title{Novel Results on the Number of Runs of the Burrows-Wheeler-Transform}
\author{Sara Giuliani\inst1 \and Shunsuke Inenaga\inst2 \and Zsuzsanna Lipt\'ak\inst1 \and Nicola Prezza\inst3 \and Marinella Sciortino\inst4 \and Anna Toffanello\inst1}
\institute{
Dipartimento di Informatica, University of Verona, Italy, \\
\email{$\{$sara.giuliani$\_$01, zsuzsanna.liptak$\}$@univr.it, anna.toffanello@studenti.univr.it} 
\and 
Department of Informatics, Kyushu University, Fukuoka, Japan, \\
PRESTO, Japan Science and Technology Agency, Kawaguchi, Japan
\email{inenaga@inf.kyushu-u.ac.jp}
\and 
Department of Business and Management, LUISS University, Rome, Italy, 
\email{nprezza@luiss.it}
\and 
Dipartimento di Matematica e Informatica, University of Palermo, Italy, 
\email{marinella.sciortino@unipa.it}
}
\authorrunning{S.\ Giuliani, S.\ Inenaga, Zs.\ Lipt\'ak, N.\ Prezza, M.\ Sciortino, A.\ Toffanello}
\titlerunning{}
\begin{document}

\maketitle

\begin{abstract}
The Burrows-Wheeler-Transform (BWT), a reversible string transformation, is one of the fundamental components of many current data structures in string processing.  It is central in data compression, as well as in efficient query algorithms for sequence data, such as webpages, genomic and other biological sequences, or indeed any textual data.  The BWT lends itself well to compression because its number of equal-letter-runs (usually referred to as $r$) is often considerably lower than that of the original string; in particular, it is well suited for strings with many repeated factors.  In fact, much attention has been paid to the $r$ parameter as measure of repetitiveness, especially to evaluate the performance in terms of both space and time of compressed indexing data structures. 

In this paper, we investigate $\rho(v)$, the ratio of $r$ and of the number of runs of the BWT of the reverse of $v$.  Kempa and Kociumaka [FOCS 2020] gave the first non-trivial upper bound as $\rho(v) = O(\log^2(n))$, for any string $v$ of length $n$.  However, nothing is known about the tightness of this upper bound.  We present infinite families of binary strings for which $\rho(v) = \Theta(\log n)$ holds, thus giving the first non-trivial lower bound on $\rho(n)$, the maximum over all strings of length $n$.  

Our results suggest that $r$ is not an ideal measure of the repetitiveness of the string, since the number of repeated factors is invariant between the string and its reverse.  We believe that there is a more intricate relationship between the number of runs of the BWT and the string's combinatorial properties. 
\end{abstract}

\begin{keywords} Burrows-Wheeler-Transform, 
compressed data structures, 
string indexing, 
repetitiveness, 
combinatorics on words
\end{keywords}


\section{Introduction}\label{sec:intro}

Since its introduction in 1994 by Michael Burrows and David J. Wheeler, the Burrows-Wheeler Transform (BWT) \cite{BurrowsWheeler94} has played a fundamental role in lossless data compression and string-processing algorithms. The BWT of a word $w$ can be obtained by concatenating the last characters of the lexicographically-sorted conjugates (that is, rotations) of $w$. Among its many fundamental properties, this permutation turns out to be invertible and more compressible than the original word $w$. The latter property follows from the fact that sorting the conjugates of $w$ has the effect of clustering together repeated factors; as a consequence, characters preceding those repetitions are clustered together in the BWT, and thus repetitions in $w$ tend to generate long runs of equal characters in its BWT. The more repetitive $w$, the lower the number $r$ of such runs. 
This fact motivated recent research on data structures whose size is bounded as a function of $r$: the most prominent example in this direction, the \emph{r-index} \cite{GNP20}, is a fully-compressed index of size $\Oh(r)$ able to locate factor occurrences in log-logarithmic time each. Other examples of recent algorithms working in runs-bounded space include index construction \cite{Kempa19} and data compression in small working space \cite{bannai2018online,PP17,ohno2018faster}.

As it turns out, $r$ is a member of a much larger family of word-repetitiveness measures that have lately generated much interest in the research community. Examples of those measures include (but are not limited to) the number $z$ of factors in the LZ77 factorization \cite{LZ76}, the number $g$ of rules in the smallest context-free grammar generating the word \cite{KY00}, the size $b$ of the smallest bidirectional macro scheme
\cite{SS82}, 
and the size $e$ of the CDAWG \cite{blumer1987complete}. More recently, it was shown that all those compressors are particular cases of a combinatorial object  named \emph{string attractor} \cite{KP18} whose size $\gamma$ lower-bounds all measures $r$, $z$, $g$, $b$, and $e$. In turn, in \cite{KNPlatin20} it was shown that $\gamma$ is lower-bounded by another measure, $\delta$, which is linked to factor complexity (that is, to the number of distinct factors of each length) and better captures the word's repetitiveness. On the upper-bound side, the papers \cite{KNPlatin20,KP18} provided approximation ratios of all measures but $r$ with respect to $\gamma$. Finding an upper-bound for $r$ remained an open problem until the recent work of Kempa and Kociumaka \cite{KK19}, who showed that, for any word of length $n$, $r=\Oh(\delta \log^2 n)$ (which in turn implies $r = \Oh(\gamma \log^2 n)$). As stated explicitly in~\cite{KK19}, this implies the first upper bound on the ratio $\rho$ between $r$ and the number of runs in the BWT of the reverse of the word, namely $\rho = \Oh(\log^2 n)$.

This leaves open the interesting question of whether this bound is tight. In this paper, we give a first answer to this question by exhibiting an infinite  family of binary words whose members satisfy $\rho = \Theta(\log n)$. This contrasts the experimental observation made in \cite{BCGPR15,PP17} that $\rho$ appears to be constant on real repetitive text collections, and shows that $r$ is not a strong repetitiveness measure since---unlike $b$, $g$, $\gamma$, and $\delta$---it is not invariant under reversal.

An added value of the proof we present lies in a surprising insight into the exact structure of the BWT matrix of the words we study: right-extensions of Fibonacci words. This insight allows us to further extend the method to a much larger family of words, giving the number of runs of the BWT for both the word and its reverse, for right-extensions of all standard words. 
As it turns out, the words we obtain from Fibonacci words are maximal with respect to $\rho$ within this class. At the same time, we have verified experimentally that these words are not maximal among all words of the same length. This leaves a gap on the maximum on $\rho$, taken over all words of length $n$,  between our lower bound $\Omega(\log n)$ and the upper bound of $\Oh(\log^2 n)$ of~\cite{KK19}. 

As a matter of fact, the reverse of the Fibonacci extensions allow us to prove an even more surprising result: a single character extension can increase $r$ by a \emph{multiplicative} factor $\Theta(\log n)$. This result is the equivalent of the ``one-bit catastrophe'' exhibited by Lagarde and Perifel \cite{Lagarde18} for Lempel-Ziv '78: using these compression schemes, the compression ratio of the word can change dramatically if just one bit is prepended to the input.


\section{Basics}\label{sec:basics}

Let $\Sigma = \{a,b\}$, with $a<b$. A {\em binary word} (or {\em string}) $w$ is a finite sequence of {\em characters} (or {\em letters}) from $\Sigma$. We denote the $i$th character of $w$ by $w[i]$ and index words from $1$. We denote by $|w|$ the length of $w$, and by $|w|_a$ resp.\ $|w|_b$ the number of characters $a$ resp.\ $b$ in $w$. The empty word $\epsilon$ is the unique word of length $0$. The set of words over $\Sigma$ is denoted $\Sigma^*$. We write $w^{\rev} = w[n]\cdots w[1]$ for the {\em reverse}
of a word $w$ of length $n$.
The word $w'$ is a {\em conjugate} of the word $w$ if $w'= w[i]\cdots w[n]w[1]\cdots w[i-1] =: \conj_i(w)$ for some $i=1,\ldots, n$ (also called the {\em $i$th  rotation} of $w$). 

If $w=uxv$, for some words $u,x,v \in \Sigma^*$, then $u$ is called a {\em prefix}, $v$ a {\em suffix}, and $x$ a {\em factor} of $w$. A prefix (suffix, factor) $u$ of $w$ is called {\em proper}  if $u\neq w$. A word $u$ is a {\em circular factor} of $w$ if it is the prefix of some conjugate of $w$. A circular factor $u$ is called {\em left-special} if both $au$ and $bu$ occur as circular factors. For an integer $k\geq 1$, $u^k = u\cdots u$ is the $k$th power of $u$. A word $w$ is called {\em primitive} if $w = u^k$ implies $k=1$. A word $w$ is primitive if and only if it has exactly $|w|$ distinct conjugates. 

For two words $v,w$, the {\em longest common prefix} $lcp(v,w)$ is defined as the maximum length word $u$ such that $u$ is a prefix both of $v$ and of $w$. The {\em lexicographic order} on $\Sigma^*$ is defined by: $v < w$ if either $v$ is a proper prefix of $w$, or $ua$ is a prefix of $v$ and $ub$ is a prefix of $w$, where $u = lcp(v,w)$. A {\em Lyndon word} is a primitive word which is lexicographically smaller than all of its conjugates.  To simplify the discussion, we will assume from now on that $w$ is primitive (but everything can be extended also to non-primitive words). 

The {\em Burrows-Wheeler-Transform} (BWT)~\cite{BurrowsWheeler94} of a word $w$ of length $n$ is a permutation of the characters of $w$, defined as the sequence of final characters of the lexicographically ordered set of conjugates of $w$. More precisely, let the {\em BW-array} be an array of size $n$ defined as: $BW[i] = k$ if $\conj_k(w)$ is the $i$th conjugate of $w$ in lexicographic order.\footnote{Note that this is in general not the same as the suffix array SA, since here we have the conjugates and not the suffixes.} Then $\bwt(w)[i] = w[{BW[i]-1}]$, where we set $w[0] = w[n]$. Another way to visualize the BWT is via an $(n\times n)$-matrix containing the lexicographically sorted conjugates of $w$: the BWT of $w$ equals the last column of this matrix, read from top to bottom, see Fig.~\ref{fig:example_s8b}. By definition, $\bwt(w) = \bwt(w')$ if and only if $w$ and $w'$ are conjugates. 

For a word $w$, let $\runs(w)$ denote the number of maximal equal-letter runs of $w$, and $r(w) = \runs(\bwt(w))$. We are now ready for our main definition: 

\begin{definition} Let $w\in \{a,b\}^*$. We define the {\em runs-ratio} $\rho(w)$ as 
\begin{align*}
\rho(w) &= \max \left(\frac{\runs(\bwt(w))}{\runs(\bwt(w^{\rev}))}, 
\frac{\runs(\bwt(w^{\rev}))}{\runs(\bwt(w))}\right) = \max \left( \frac{r(w)}{r(w^{\rev})}, \frac{r(w^{\rev})}{r(w)} \right), 
\end{align*}

\noindent and $\rho(n) = \max \{ \rho(w) : |w| = n\}$. 
\end{definition}

Note that $\rho(w) \geq 1$ holds by definition.  Since $r(w) = r(w^{\rev})$ for all $w$ with $|w|\leq 6$, we have $\rho(n) =1$ for $n<7$.  In Table~\ref{tab:rhon1}, we give the values of $\rho(n)$ for $n=7, \ldots, 30$ (computed with a computer program): 

\begin{table}\centering
\begin{tabular}{c|*{24}{c}}
$n$ & 7 &  8 & 9 & 10 & 11 & 12 & 13 & 14 & 15 & 16 & 17 & 18 & 19 & 20 & 21 & 22 & 23 & 24 & 25 & 26 & 27 & 28 & 29 & 30\\
\hline
$\rho(n)$ & 1.5 & 1.5 & 2 & 2 & 2 & 2 & 2 & 2 & 2 &  2 & 2.5 & 2.5 & 2.5 & 2.5 & 3 & 2.5 & 3 & 3 & 2.67 & 3 & 3 & 3 & 3 & 3\\
\end{tabular}
\caption{The values of $\rho(n)$ for $n=7,\ldots, 30$. \label{tab:rhon1}}
\end{table}

We introduce {\em standard words} next, following~\cite{deLuca97a}. Given an infinite sequence of integers $(d_0,d_1,d_2, \ldots)$, with $d_0\geq 0, d_i>0$ for all $i>0$, called a {\em directive sequence}, define a sequence of words $(s_{i})_{i\geq 0}$ of increasing length as follows: $s_0 = b, s_1 = a, s_{i+1} = s_i^{d_{i-1}}s_{i-1}$, for $i\geq 1$. The index $i$ is referred to as the {\em order} of $s_i$.  
The best known example is the sequence of {\em Fibonacci words}, which are given by the directive sequence $(1,1,1,\ldots)$, and of which the first few elements are as follows: 
\begin{align*} 
s_0 &= b, s_1 =a, s_2=ab, s_3=aba, s_4=abaab, s_5 = abaababa, s_6 = abaababaabaab,\\ 
s_7 &= abaababaabaababaababa, s_8 = abaababaabaababaababaabaababaabaab, \ldots
\end{align*}

Note that $|s_i| = F_i$, where $F_i$ is the Fibonacci sequence, defined by $F_0=F_1=1$ and $F_{i+1} = F_i + F_{i-1}$. Moreover, $|s_i|_a = F_{i-1}$ and $|s_i|_b = F_{i-2}$, for $i\geq 2$. 

Standard words are used for the construction of infinite Sturmian words, in the  sense  that  every  characteristic Sturmian word is the limit of a sequence of standard words (cf. Chapter 2 of \cite{Loth2}). 
These words have many interesting combinatorial properties and appear as extreme case in a great range of contexts \cite{KnuthMorrisPratt1977, deLucaMignosi1994, deLuca97, CastiglioneRS08 ,CS09}. 
 A fundamental result in connection with the BWT is the following: $\bwt(w) = b^qa^p$ with $\gcd(q,p)=1$ if and only if $w$ is a standard word~\cite{MantaciRS03}.


\section{Fibonacci-plus words have $\rho = \Theta(\log n)$} 

Since for a standard word $s$, $s^{\rev}$ is a conjugate, we have $\rho(s) = 1$  for all standard words $s$. 
We will show in this section that adding just one character at the end of the word suffices to increase $\rho$ from $1$ to logarithmic in the length of the word.  

\begin{definition}
A word $v$ is called {\em Fibonacci-plus} if it is either of the form $sb$, where $s$ is a Fibonacci word of even order $2k$, $k\geq 2$, or of the form $sa$, where $s$ is a Fibonacci word of odd order $2k+1$,  $k\geq 2$. In the first case, $v$ is {\em of even order}, otherwise {\em of odd order}.  
\end{definition}

The aim of this section is to prove the following theorem: 

\begin{theorem}\label{thm:fibplus}
Let $v$ be a Fibonacci-plus word, and let $|v|=n$. Then $\rho(s) = \Theta(\log n)$. 
\end{theorem}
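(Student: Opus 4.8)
The plan is to treat the two quantities $r(v)$ and $r(v^{\rev})$ separately, showing that the first stays constant while the second grows like $\log n$; the theorem then follows at once, since $\rho(v) = \max\bigl(r(v)/r(v^{\rev}),\, r(v^{\rev})/r(v)\bigr)$ and, as the analysis will reveal, $r(v^{\rev}) \geq r(v)$ for all sufficiently large order, so that $\rho(v) = r(v^{\rev})/r(v)$.

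First I would pin down $r(v)$. Writing $v = sc$ with $s$ a Fibonacci word and $c \in \{a,b\}$ the appended letter, I would describe the sorted list of conjugates of $v$ explicitly, starting from the fact that $\bwt(s) = b^q a^p$ with $\gcd(q,p)=1$ (the characterization of standard words quoted above). The aim is to show that appending a single letter perturbs this clean two-run picture only in a bounded way, so that $\bwt(v)$ always has a fixed small number of runs: direct computation for orders $4,5,6$ already gives exactly four runs in each case (for instance $\bwt(s_6b) = b^5a^7ba$), and I would prove $r(v)=4$ in general by induction on the order, tracking where the single new conjugate ending in $c$ and its neighbours land in the sorted matrix, treating the even ($sb$) and odd ($sa$) cases in parallel.

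The heart of the argument is the analysis of $r(v^{\rev})$. Here I would use $v^{\rev} = c\cdot s^{\rev}$ together with the fact that $s^{\rev}$ is a conjugate of $s$, so that $v^{\rev}$ is exactly the cyclic word $s$ with the single letter $c$ inserted at the position of the reversal seam. The plan is to exhibit the BW-matrix of $v^{\rev}$ precisely, exploiting the recursive identity $s_{i+1}=s_i s_{i-1}$: the self-similar, continued-fraction-like organization of the conjugates of a standard word groups them into $\Theta(i)$ nested blocks, and I expect the inserted letter to flip the preceding character (i.e.\ the last column of the matrix) exactly at the boundary of each block. Counting these boundaries should give $r(v^{\rev})$ exactly, growing linearly in the order $i$ (consistent with the values $4,4,6$ observed for orders $4,5,6$). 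Since $|v| = n = F_i + 1$ and $F_i = \Theta(\phi^i)$, we have $i = \Theta(\log n)$, whence $r(v^{\rev}) = \Theta(\log n)$.

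Combining $r(v) = \Oh(1)$ with $r(v^{\rev}) = \Theta(\log n)$ yields $\rho(v) = r(v^{\rev})/r(v) = \Theta(\log n)$, as claimed. I expect the third step to be the main obstacle: the delicate point is to describe the sorted conjugates of $v^{\rev}$ sharply enough to locate \emph{every} run boundary, because the single prepended letter disrupts the two-run structure of $s^{\rev}$ simultaneously at all scales of the Fibonacci self-similarity, and it is precisely this ``one-bit catastrophe'' that must be quantified level by level. I would also stress that the general $\Oh(\log^2 n)$ bound of Kempa and Kociumaka is too weak to supply the matching upper bound $\rho(v)=\Oh(\log n)$, so both directions of the $\Theta$ have to come from this same exact structural analysis rather than from a black-box citation.
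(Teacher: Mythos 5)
Your overall architecture coincides with the paper's: establish $r(v)=4$ (the paper's Proposition~\ref{prop:fibplus-r}), establish that $r(v^{\rev})$ grows linearly in the order (Proposition~\ref{prop:fibplusrev}, which gives exactly $2k$), and convert order to $\Theta(\log n)$ via $n=F_i+1$. Your small-order data ($\bwt(s_6b)=b^5a^7ba$; $r(v^{\rev})=4,4,6$ at orders $4,5,6$) are correct, as is your remark that the Kempa--Kociumaka $\Oh(\log^2 n)$ bound cannot be cited as a black box for the upper direction of the $\Theta$. The problem is that the two steps carrying all the technical weight are left as expectations rather than proofs, and in both cases the missing content is substantial. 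For $r(v)=4$, ``tracking where the single new conjugate ending in $c$ and its neighbours land'' understates the task: appending one letter changes \emph{every} rotation, so one must show that the relative order of \emph{all} pairs of conjugates of $s$ survives the insertion. The paper does this non-inductively, using that consecutive conjugates of $s_{2k}$ have the form $uabu'$ and $ubau'$ with $u'u=x_{2k}$ the central palindrome, and that $x_{2k}$ has exactly two circular occurrences in $s$, whence the inserted $b$ lands inside $u'$ in one of the two rotations and the longest common prefix remains $u$. Your proposed induction on the order would instead have to relate the entire sorted matrix of $s_{2k+2}b$ to that of $s_{2k}b$, which is not obviously easier and is nowhere sketched.

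For $r(v^{\rev})$, your picture---one run boundary per level of the Fibonacci recursion---is qualitatively the right answer (in the even case $\bwt(v^{\rev}) = b^{F_{2k-2}-k+1}a^{F_0}ba^{F_2}b\cdots a^{F_{2k-4}}bba^{F_{2k-2}}$), but ``I expect the inserted letter to flip the preceding character exactly at the boundary of each block'' is precisely the statement that needs proof, and your proposal contains no mechanism for it. The paper's mechanism has three ingredients you would have to reinvent: (i) a split of the matrix into top, middle and bottom parts anchored at the explicitly ordered conjugates $ax_{2k}bb < x_{2k}bba < bax_{2k}b < bbax_{2k}$; (ii) a characterization of the left-special circular factors of $v^{\rev}$ (exactly the prefixes of $x_{2k-1}b$ and of $bax_{2k-2}$), which together with the balancedness property forces every occurrence of $x_{2(k-i)}b$ to be preceded by the same letter $a$, so that each block contributes a single run; and (iii) a counting lemma showing $ax_{2(k-i)}b$ has exactly $F_{2i}$ circular occurrences, which yields the run lengths and hence the count $2k$. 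Without (ii) and (iii)---or some equivalent---the ``counting the boundaries'' step has no proof; as it stands your proposal is a correct plan with the same skeleton as the paper's argument, but with its technical core missing.
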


We will prove the theorem by showing that, for a Fibonacci-plus word $v$, $r(v) = 4$ (Prop.~\ref{prop:fibplus-r}) and $r(v^{\rev})$ is linear in the order of the word itself (Prop.~\ref{prop:fibplusrev-r-even}). The statement will then follow by an argument on the length of $v$. 

Fibonacci words have very well-known structural and combinatorial properties \cite{deLuca81}, some of them can be deduced from more general properties that hold true for all standard words (see \cite{deLucaMignosi1994,deLuca97a,BersteldeLuca97,BorelR06}). In the next proposition we summarize some of these properties, which will be useful in the following.

\begin{proposition}[Some known properties of the Fibonacci words] \label{prop:fib}
Let $s_i$ be the Fibonacci word of order $i\geq 0$. The following properties hold:
\begin{enumerate}
    \item for all $k\geq 1$, $s_{2k}=x_{2k}ab$ and $s_{2k+1}=x_{2k+1}ba$, where $x_{2k}$ and $x_{2k+1}$ are  palindromes ($x_2=\epsilon$). 
    \item for all $k\geq 2$,
    \begin{itemize}
        \item $s_{2k}=x_{2k-1}bax_{2k-2}ab=x_{2k-2}abx_{2k-1}ab$
        \item $s_{2k+1}=x_{2k}abx_{2k-1}ba=x_{2k-1}bax_{2k}ba$.
    \end{itemize}  
   \item for all $i\geq 2$, $a x_i b$ is a Lyndon word.
   \item for all circular factors $y,z$ of $s_i$ with $|y| = |z|$, and for each $c \in \Sigma$, 
   one has that $||y|_c - |z|_c| \leq  1$ (Balancedness Property). 
\end{enumerate}
\end{proposition}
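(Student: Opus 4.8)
The plan is to treat the four items in order of logical dependency, since items~2--4 all rest on the palindromic structure asserted in item~1. For $i \ge 2$ let $x_i$ denote $s_i$ with its last two letters deleted, so that item~1 is exactly the claim that $x_i$ is a palindrome and that $s_i = x_i ab$ for $i$ even, $s_i = x_i ba$ for $i$ odd. The engine driving everything is a \emph{near-commutativity lemma}: for every $i \ge 2$ the words $s_i s_{i-1}$ and $s_{i-1} s_i$ coincide except that their last two letters are transposed. I would prove this by a short induction on $i$: writing $s_{i+1} s_i = s_i (s_{i-1} s_i)$ and $s_i s_{i+1} = s_i (s_i s_{i-1})$, the two products share the prefix $s_i$ and then reduce to comparing $s_{i-1} s_i$ with $s_i s_{i-1}$, which is the hypothesis for index $i$; the base case $s_2 s_1 = aba$ versus $s_1 s_2 = aab$ is immediate. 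In particular the common prefix of $s_i s_{i-1}$ and $s_{i-1} s_i$ has length $F_{i+1}-2$ and equals $x_{i+1}$.

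For items~1 and~2 I would run a single induction establishing, for each $i$, that $x_i$ is a palindrome together with the correct two-letter suffix of $s_i$. The suffix claim is immediate from $s_{i+1} = s_i s_{i-1}$, since for $i \ge 3$ the last two letters of $s_{i+1}$ are those of $s_{i-1}$, so the pattern $ab/ba$ alternates from the base cases $s_2 = ab$, $s_3 = aba$. For the palindrome claim, take $i+1$ even (the odd case is symmetric): by the inductive hypothesis $s_i = x_i ba$ and $s_{i-1} = x_{i-1} ab$ with $x_i,x_{i-1}$ palindromes, so $x_{i+1} = x_i ba\, x_{i-1}$; the near-commutativity lemma identifies this with the corresponding prefix of $s_{i-1} s_i$, namely $x_{i-1} ab\, x_i$, and then $(x_i ba\, x_{i-1})^{\rev} = x_{i-1} ab\, x_i = x_i ba\, x_{i-1}$ shows $x_{i+1}$ is a palindrome. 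Item~2 then falls out directly: the first decomposition of $s_{2k}$ is just $s_{2k-1} s_{2k-2} = x_{2k-1} ba\, x_{2k-2} ab$ read off from item~1, while the second is obtained by appending $ab$ to the palindrome identity $x_{2k} = x_{2k}^{\rev}$; the two decompositions of $s_{2k+1}$ are handled the same way.

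For item~3, set $\ell_i = a x_i b$. Item~2 yields the factorizations $\ell_{2k} = \ell_{2k-1}\ell_{2k-2}$ and $\ell_{2k+1} = \ell_{2k-1}\ell_{2k}$, each obtained by distributing the leading $a$ and trailing $b$ across the appropriate expression for $x_i$, the middle letters combining into the inner $b\,a$. I would then induct using the standard fact that the product $uv$ of two Lyndon words with $u<v$ is again Lyndon and lies strictly between $u$ and $v$ (see, e.g., \cite{Loth2}). Starting from the Lyndon words $\ell_2 = ab$ and $\ell_3 = aab$ with $\ell_3 < \ell_2$, each application of the lemma both certifies that the newly formed $\ell_i$ is Lyndon and supplies the single order inequality needed at the next step, producing the interleaving $\ell_3 < \ell_5 < \cdots < \ell_{2k+1} < \ell_{2k} < \cdots < \ell_4 < \ell_2$. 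The only care needed is to check that the inequality required to form each product is precisely the one delivered by the previous application.

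Item~4 is the cyclic balance property, for which I would reduce to the classical balance of the infinite Fibonacci word $f$, any two equal-length factors of which differ by at most one in the count of each letter (cf.\ \cite{Loth2}). A circular factor of $s_i$ of length $m \le |s_i|$ is a factor of $s_i s_i$, and a direct computation from the recursion shows $s_i s_i = s_i s_{i-1} s_{i-2}$ is a prefix of $s_{i+2} = s_i s_{i-1} s_{i-1} s_{i-2}$ (because $s_{i-2}$ is a prefix of $s_{i-1}$), hence a factor of $f$; applying the balance of $f$ to $y$ and $z$ finishes the proof. The main obstacle is the core of items~1 and~2: getting the simultaneous induction to close, in particular proving the near-commutativity lemma and combining it with the palindromicity of the smaller central words to derive the commutation $x_i ba\, x_{i-1} = x_{i-1} ab\, x_i$ that makes $x_{i+1}$ a palindrome. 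Once that identity is in hand, items~2--4 are comparatively routine.
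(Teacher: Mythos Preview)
The paper does not prove Proposition~\ref{prop:fib}; it is stated as a summary of known facts with pointers to \cite{deLuca81,deLucaMignosi1994,deLuca97a,BersteldeLuca97,BorelR06}. Your proposal, by contrast, supplies a self-contained argument, and the outline is correct. The near-commutativity lemma for $s_i s_{i-1}$ versus $s_{i-1} s_i$, combined with the induction that simultaneously establishes the two-letter suffix pattern and the palindromicity of the central words $x_i$, is precisely the classical route to items~1 and~2. Your derivation of item~3 via the Lyndon concatenation lemma and the resulting interleaved chain of inequalities is also standard and sound; the check that each step produces exactly the inequality needed at the next is carried out correctly. For item~4, embedding $s_i s_i$ as a prefix of $s_{i+2}$ (via $s_{i-2}$ being a prefix of $s_{i-1}$) and then appealing to the balance of the infinite Fibonacci word is valid for $i\ge 3$; note that the prefix argument breaks at $i=2$ since $s_0=b$ is not a prefix of $s_1=a$, so you should handle $i\le 2$ by direct inspection, which is trivial.

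In short, your approach is correct and more detailed than the paper, which simply defers to the literature; nothing of substance is missing beyond the small-$i$ base cases in item~4.
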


\begin{example}
Let us consider $s_8=abaababaabaababaababaabaababaabaab$ the Fibonacci word of order $8$ and length $F_8=34$. 

One can verify that the prefix  $x_8=abaababaabaababaababaabaababaaba$ is a palindrome. Moreover $x_8=x_7 ba x_6=x_6 ab x_7$, where $x_7=abaababaabaababaaba$ and $x_6=abaababaaba$.
\end{example}

\begin{proposition}\label{prop:fibplus-r} Let $v$ be a Fibonacci-plus word. Then $r(v) = 4$. In particular, 
\begin{enumerate}
\item if $v = s_{2k}b$, then $\bwt(v) = b^{F_{2k-2}}a^{F_{2k-1}-1}ba$, and 
\item if $v=s_{2k+1}a$, then $\bwt(v) = 
bab^{F_{2k-1}-1}a^{F_{2k}}$.  
\end{enumerate}

\end{proposition}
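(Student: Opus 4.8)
The plan is to compute $\bwt(v)$ directly by understanding the lexicographic ordering of the conjugates of $v$. The key observation is that $v = s b$ (resp.\ $sa$) is obtained from a standard word $s$, and we know from~\cite{MantaciRS03} that $\bwt(s) = b^qa^p$ with a single $b$-run followed by a single $a$-run. Appending one character perturbs this structure in a controlled way, and I would track exactly how the sorted order of conjugates changes.

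\textbf{Even case $v = s_{2k}b$.}
First I would use the factorization $s_{2k} = x_{2k}ab$ from Prop.~\ref{prop:fib}(1), so that $v = x_{2k}abb$. The conjugates of $v$ are the $n$ rotations of this word, and each ends in some character that becomes the corresponding entry of $\bwt(v)$. The strategy is to split the conjugates into classes according to the position of the appended $b$ relative to the rotation point, and to compare them lexicographically using the palindrome and Lyndon-word structure of $x_{2k}$ (Prop.~\ref{prop:fib}(3) tells us $ax_{2k}b$ is Lyndon, which pins down the smallest conjugate). Concretely, I expect the bulk of the conjugates to be near-rotations of $s_{2k}$ whose relative order is essentially inherited from $\bwt(s_{2k}) = b^{F_{2k-2}}a^{F_{2k-1}-1}$-type structure, contributing the long $b$-run followed by the long $a$-run; the two ``anomalous'' conjugates created by the extra $b$ (those whose last character is shifted because of the doubled $bb$) contribute the trailing $ba$, giving $\bwt(v) = b^{F_{2k-2}}a^{F_{2k-1}-1}ba$ and hence exactly $4$ runs. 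The character counts must add up: $|v|_b = F_{2k-2}+1$ and $|v|_a = F_{2k-1}$, which matches $F_{2k-2}$ $b$'s plus the final $b$, and $F_{2k-1}-1$ $a$'s plus the final $a$.

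\textbf{Odd case $v = s_{2k+1}a$.}
This is the symmetric computation using $s_{2k+1} = x_{2k+1}ba$, so $v = x_{2k+1}baa$. The claimed output $\bwt(v) = bab^{F_{2k-1}-1}a^{F_{2k}}$ again has $4$ runs, now with two short leading runs $ba$ followed by the long $b$-run and long $a$-run. I would identify the conjugate beginning with $aa\cdots$ (the doubled $a$ forces a new smallest conjugate and reshuffles the top of the matrix), producing the leading $ba$; the remaining conjugates inherit the $b^pa^q$ pattern of the underlying standard word. The balancedness property Prop.~\ref{prop:fib}(4) is the tool that guarantees no further run-breaks occur: since any two equal-length circular factors differ by at most one in each letter count, long blocks of conjugates sharing a common prefix end in the same character, so runs cannot fragment.

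\textbf{Main obstacle.}
The delicate part is proving rigorously that appending the extra character creates \emph{exactly} two new run boundaries and not more — i.e.\ that only a constant number of conjugates have their last character or their sorted position altered relative to the standard word's clean $b^qa^p$ pattern. This requires a careful case analysis of where the inserted character sits inside each rotation and a precise lexicographic comparison between the $O(1)$ anomalous conjugates and their neighbours, leaning on the palindromicity of $x_i$ and the explicit decompositions in Prop.~\ref{prop:fib}(2). I would expect to handle this by locating the conjugates containing the factor $bb$ (even case) or $aa$ (odd case) and showing they land adjacently at the boundary between the $b$-block and the $a$-block, which is exactly what forces the $4$-run shape rather than a $2$-run or $6$-run shape.
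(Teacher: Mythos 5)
Your high-level plan coincides with the paper's: isolate the $\Oh(1)$ conjugates affected by the appended character, argue that all remaining conjugates inherit both their relative order and their final characters from $s_{2k}$, and do the run bookkeeping (which you carry out correctly). However, the step you yourself label the ``main obstacle'' is not a technicality to be deferred --- it \emph{is} the proof, and your sketch lacks the two ideas that close it. First, the location of the anomalous conjugates: in the even case, $bx_{2k}ab$ and $bbx_{2k}a$ are the two lexicographically \emph{largest} conjugates of $v$, producing the trailing $ba$ at the very bottom of the matrix (and symmetrically the two \emph{smallest} in the odd case, producing the leading $ba$). Your closing claim that the conjugates containing $bb$ (resp.\ $aa$) ``land adjacently at the boundary between the $b$-block and the $a$-block'' is false in both cases, and it even contradicts the run patterns you correctly wrote down earlier ($b^{F_{2k-2}}a^{F_{2k-1}-1}ba$ ends, not straddles, its $a$-block). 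Pinning down that $bx_{2k}ab$ is the penultimate conjugate is itself nontrivial: the paper derives it from the known fact~\cite{MantaciRS03,BorelR06} that $x_{2k}ab$ and $x_{2k}ba$ occupy rows $F_{2k-2}$ and $F_{2k-2}+1$ of the matrix of $s_{2k}$ (the last $b$ and first $a$ of $\bwt(s_{2k})$), which yields the comparison criterion of Eq.~\eqref{eq:lasta}.

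Second, and more importantly, your argument that the bulk order is preserved has no actual mechanism behind it. The paper's key tool is the structure of lexicographically \emph{consecutive} conjugates of a standard word~\cite{BorelR06}: they have the form $uabu'$ and $ubau'$ with $u'u = x_{2k}$; since $x_{2k}$ occurs exactly twice as a circular factor of $s_{2k}$ (a consequence of the two decompositions in Prop.~\ref{prop:fib}, part 2), the appended $b$ falls strictly after the distinguishing position in both words --- as $uab\,b\,u'$ in one, and inside the suffix $u'$ in the other --- so $u$ remains the longest common prefix and every pairwise comparison is untouched. Nothing you invoke substitutes for this: the Lyndon property of $ax_{2k}b$ pins down the smallest conjugate, which is not where the action is in the even case, and balancedness (Prop.~\ref{prop:fib}, part 4) does not rule out that inserting a character swaps two conjugates whose comparison was decided deep inside the word --- indeed the paper uses balancedness only in the analysis of $v^{\rev}$ (Lemma~\ref{lemma:u-top}, Prop.~\ref{prop:mid}), never in the proof of this proposition. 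One further piece of bookkeeping you gloss over: the conjugate $bx_{2k}a$ of $s_{2k}$, which ends in $a$, has no counterpart among the $n-2$ non-anomalous conjugates of $v$; this is exactly why the inherited $a$-run is $a^{F_{2k-1}-1}$ rather than $a^{F_{2k-1}}$, a fact your character-count check is consistent with but does not establish.
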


\begin{proof} We give the proof for even order only. The proof for odd order is analogous. 

Let us write $v= sb$, with $s=s_{2k}$. Since Fibonacci words are standard words, it follows that $\bwt(s) = b^{F_{2k-2}}a^{F_{2k-1}}$ (see Sec.~\ref{sec:basics}). Since $s$ is of even order, it can be written as $s = xab$ for a palindrome $x$ (Prop.~\ref{prop:fib}, part 1); moreover, it follows  from the specific form of $x$ (Prop.~\ref{prop:fib}, part 2) that both $xab$ and $xba$ are conjugates. It is further known that the two conjugates $xab$ and $xba$ are at position $F_{2k-2}$ and $F_{2k-2}+1$, respectively, i.e.\ they correspond to the last $b$ and the first $a$ in the BWT of $s$~\cite{MantaciRS03,BorelR06}.  From this it follows that, for all $h$,
\begin{equation} \label{eq:lasta}
v[h] = a, v[h-1] = b \quad \Rightarrow \quad \conj_h(s) < xab.
\end{equation}

Now consider the conjugates of $v=sb$. Clearly, $\conj_{n-1}(v)$ is the largest, since it starts with $bb$, and $s$ contains no factor $bb$. 

Next we show that the penultimate conjugate is $\conj_{n}(v) = bxab$. In order to prove this, we need to show that $\conj_i(v) < bxab$ for all $i<n-1$. If $\conj_i(v)$ begins with $a$, then this is clearly true. Otherwise, $\conj_i(v) = bv[i+1]\cdots v[i-1]$. Since $v[i]=b$, we have that  $\conj_{i+1}(v) < xab$ by Eq.~\eqref{eq:lasta}, and therefore, $\conj_i(v) < bxab$, as claimed. 

We have so far explained the last two characters of the BWT. Now we will show that the remaining part of the $BW$-arrays coincides for the two words $s$ and $v$.  A quick inspection shows that the last character of $\conj_i(s)$ and $\conj_i(v)$ is the same, for all $i<n-1$, which yields the desired form of the BWT of $v$. 

We will prove that the relative order of two consecutive conjugates of $s$ is the same in $v$, i.e.\ that the insertion of the $b$ at the end of $s$ does not change this order. This will imply that the $BW$-array remains the same for the first $n-2$ entries. 

Let $\conj_i(s)<\conj_j(s)$ be consecutive conjugates of $s$. If $i<j$, then the new $b$ appears earlier in $\conj_j(s)$ than in $\conj_i(s)$, therefore $\conj_i(v)<\conj_j(v)$ clearly holds. Now let $i>j$. It is known~\cite{BorelR06} that two consecutive conjugates of $s$ have the form $uabu'$ and $ubau'$, where $u'u = x$ is the palindrome from Prop.~\ref{prop:fib}, part 2. From $s_{2k}=x_{2k-1}bax_{2k-2}ab=x_{2k-2}abx_{2k-1}ab$, it follows that $x_{2k} = x_{2k-1}bax_{2k-2} = x_{2k-2}abx_{2k-1}$, and we deduce that $x=x_{2k}$ has exactly two occurrences in $s$ as a circular factor. Therefore, $\conj_i(v) = uabbu'$ and the new $b$ appears in $\conj_j(v)$ within the suffix $u'$. This implies $u = lcp(\conj_i(v),\conj_j(v))$, and thus $\conj_i(v) < \conj_j(v)$. 

This completes the proof.
\end{proof}

The next proposition gives the form of the BWT of the reverse. 

\begin{proposition}\label{prop:fibplusrev-r-even} \label{prop:fibplusrev}
Let $v$ be a Fibonacci-plus word. Then $r(v^{\rev}) = 2k$. In particular, 
\begin{enumerate}
\item if $v$ is of even order, i.e.\ $v=s_{2k}b$ for some $k\geq 1$, then 
$\bwt(v^{\rev}) = b^{F_{2k-2} - k + 1} a^{F_0} b a^{F_2} b a^{F_4} b \cdots a^{F_{2k-4}}bba^{F_{2k-2}}$, 
\item if $v$ is of odd order, i.e.\ $v=s_{2k+1}a$ for some $k\geq 1$, then $\bwt(v^{\rev}) = b^{F_{2k-2}}aab^{F_{2k-4}}ab^{F_{2k-6}}a\cdots b^{F_2}ab^{F_0}a^{F_{2k}-k+1}$. 
\end{enumerate}
\end{proposition}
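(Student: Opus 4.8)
The plan is to analyze the lexicographic order of the conjugates of $v^{\rev}$ directly, exploiting the palindromic structure of Fibonacci words. First I would observe that $v^{\rev} = (s_{2k}b)^{\rev} = b\, s_{2k}^{\rev}$. Since $s_{2k} = x_{2k}ab$ with $x_{2k}$ a palindrome (Prop.~\ref{prop:fib}, part~1), we get $s_{2k}^{\rev} = ba\, x_{2k}$, so $v^{\rev} = b\,ba\,x_{2k} = bbax_{2k}$. The key structural fact I would try to establish is that the conjugates of $v^{\rev}$ fall into well-understood lexicographic classes governed by the self-similar recursion $x_{2k} = x_{2k-2}abx_{2k-1} = x_{2k-1}bax_{2k-2}$ (Prop.~\ref{prop:fib}, part~2), which lets me track how the circular factors beginning at each position compare.

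Next I would group the conjugates by their starting circular factor. The conjugates beginning with $b$ should sort first and contribute the initial block $b^{F_{2k-2}-k+1}$, while the conjugates beginning with $a$ contribute the interleaved pattern $a^{F_0}ba^{F_2}b\cdots a^{F_{2k-4}}bba^{F_{2k-2}}$. The crucial observation is that the run structure of the BWT mirrors the \emph{left-special} circular factors: each left-special factor forces a boundary between a preceding $a$ and a following $b$ (or vice versa) in the BWT. I would therefore enumerate the left-special circular factors of $v^{\rev}$ and show there are exactly $\Theta(k)$ of them, each creating one alternation in the last column. The Balancedness Property (Prop.~\ref{prop:fib}, part~4) should be the main tool for controlling which conjugates are adjacent in sorted order and hence which characters precede them. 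Since the insertion of the extra $b$ breaks the perfect balance that standard words enjoy, each ``defect'' introduced at a scale $F_{2j}$ of the Fibonacci recursion contributes one new run, accounting for the exponents $F_0, F_2, \ldots, F_{2k-2}$ appearing in the claimed formula.

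The hard part will be pinning down the exact exponents $F_0, F_2, F_4, \ldots, F_{2k-2}$ in the run-length encoding, rather than merely establishing that there are $\Theta(k)$ runs. I expect this requires an induction on $k$ that simultaneously tracks the sorted order of conjugates and the positions of the inserted $b$ across all recursion levels. Concretely, I would set up an inductive hypothesis describing the full sorted list of conjugates of $v^{\rev}$ at order $2k$ in terms of that at order $2k-2$, using the two decompositions of $x_{2k}$ to show how each old conjugate ``splits'' or gets shifted when the word grows. The self-referential identity $x_{2k} = x_{2k-2}abx_{2k-1} = x_{2k-1}bax_{2k-2}$ is what makes this induction close, since it exposes the two occurrences of the smaller palindrome inside the larger one. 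The odd-order case $v = s_{2k+1}a$ should follow by a symmetric argument, with the roles of $a$ and $b$ exchanged, yielding the mirror-image formula. Finally, counting the runs in either formula gives exactly $2k$, completing the proof of $r(v^{\rev}) = 2k$.
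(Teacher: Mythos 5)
Your setup ($v^{\rev}=bbax_{2k}$, the palindrome recursion for $x_{2k}$) matches the paper, but the proposal then goes wrong at its first concrete claim. You state that the conjugates beginning with $b$ ``sort first and contribute the initial block $b^{F_{2k-2}-k+1}$'' --- both halves are false, since $a<b$: the conjugates starting with $b$ occupy the \emph{bottom} of the BWT matrix, and because $bb$ occurs exactly once in $v^{\rev}$, all of them except $bax_{2k}b$ end in $a$, so they contribute the final block $ba^{F_{2k-2}}$ (Prop.~\ref{prop:bottom}). The initial run of $b$'s instead comes from the lexicographically \emph{smallest} conjugates, all of which begin with $a$; showing that every conjugate below $x_{2k}bba$ ends in $b$ is a genuine step requiring an lcp analysis combined with balancedness (Lemma~\ref{lemma:u-top}), not a grouping by first letter. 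As written, you have conflated the first and last columns of the matrix.

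The second gap is in your central mechanism. You propose to ``enumerate the left-special circular factors and show there are exactly $\Theta(k)$ of them, each creating one alternation.'' But $v^{\rev}$ has \emph{linearly} many left-special circular factors --- exactly the prefixes of $x_{2k-1}b$ and of $bax_{2k-2}$, roughly one per length (Lemma~\ref{lemma:leftspecial}) --- and a left-special factor only \emph{permits} a run boundary in the block of rows it heads; it does not force one. The paper uses left-specialness in the opposite direction: the words $x_{2(k-i)}b$ are shown \emph{not} to be left-special, hence every occurrence is preceded by the same letter, which must be $a$ by balancedness, so each contributes one solid run of $a$'s (Prop.~\ref{prop:mid}). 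The exact exponents then come from a separate inductive counting lemma: $ax_{2(k-i)}b$ occurs exactly $F_{2i}$ times as a circular factor (Lemma~\ref{lemma:occsof_x}). Your proposal has no counterpart to this occurrence count --- the ``one defect per scale $F_{2j}$'' heuristic could at best give $\Theta(k)$ runs, never the exponents $F_0,F_2,\ldots,F_{2k-4}$ --- and the induction on $k$ over the fully sorted conjugate list, which you yourself flag as the hard part, is left entirely unexecuted. The paper avoids any such global induction by a static three-part partition of the matrix anchored at the four explicit conjugates $ax_{2k}bb < x_{2k}bba < bax_{2k}b < bbax_{2k}$, treating each part with its own local argument.
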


\begin{example}
In Figure~\ref{fig:example_s8b} we display the BWT-matrices of the Fibonacci-plus word $v =s_{8}b$ of length $35$ and of its reverse.
\end{example}

\begin{figure}[h!]\hspace*{-0.6in}\scalebox{.9}{
\begin{minipage}{0.68\textwidth}
	\begin{tabular}{r r c@{\hskip.01pt}l@{}}		
		\multicolumn{1}{p{.1cm}}{\centering BW \\ array}&   & \multicolumn{1}{p{6.3cm}}{\centering rotations of $v=$ \\ {\tt  {\footnotesize abaababaabaababaababaabaababaabaab{\underline b}}}} & \multicolumn{1}{p{.01cm}}{ $\bwt(v)$} \\
		\hline
		1 & 21 & {\tt aabaababaabaab{\underline b}abaababaabaababaabab}  & \b \\ 
		2 & 8 & {\tt aabaababaababaabaababaabaab{\underline b}abaabab}  & \b \\ 
		3 & 29 & {\tt aabaab{\underline b}abaababaabaababaababaabaabab}  & \b \\ 
		4 & 16 & {\tt aababaabaababaabaab{\underline b}abaababaabaabab}  & \b \\ 
		5 & 3 & {\tt aababaabaababaababaabaababaabaab{\underline b}ab}  & \b \\ 
		6 & 24 & {\tt aababaabaab{\underline b}abaababaabaababaababaab}  & \b \\ 
		7 & 11 & {\tt aababaababaabaababaabaab{\underline b}abaababaab}  & \b \\ 
		8 & 32 & {\tt aab{\underline b}abaababaabaababaababaabaababaab}  & \b \\ 
		9 & 19 & {\tt abaabaababaabaab{\underline b}abaababaabaababaab}  & \b \\ 
		10 & 6 & {\tt abaabaababaababaabaababaabaab{\underline b}abaab}  & \b \\ 
		11 & 27 & {\tt abaabaab{\underline b}abaababaabaababaababaabaab}  & \b \\ 
		12 & 14 & {\tt abaababaabaababaabaab{\underline b}abaababaabaab}  & \b \\ 
		13 & 1 & {\tt abaababaabaababaababaabaababaabaab{\underline b}}  & {\underline \b} \\ 
		14 & 22 & {\tt abaababaabaab{\underline b}abaababaabaababaababa}  & \a \\ 
		15 & 9 & {\tt abaababaababaabaababaabaab{\underline b}abaababa}  & \a \\ 
		16 & 30 & {\tt abaab{\underline b}abaababaabaababaababaabaababa}  & \a \\ 
		17 & 17 & {\tt ababaabaababaabaab{\underline b}abaababaabaababa}  & \a \\ 
		18 & 4 & {\tt ababaabaababaababaabaababaabaab{\underline b}aba}  & \a \\ 
		19 & 25 & {\tt ababaabaab{\underline b}abaababaabaababaababaaba}  & \a \\ 
		20 & 12 & {\tt ababaababaabaababaabaab{\underline b}abaababaaba}  & \a \\ 
		21 & 33 & {\tt ab{\underline b}abaababaabaababaababaabaababaaba}  & \a \\ 
		22 & 20 & {\tt baabaababaabaab{\underline b}abaababaabaababaaba}  & \a \\ 
		23 & 7 & {\tt baabaababaababaabaababaabaab{\underline b}abaaba}  & \a \\ 
		24 & 28 & {\tt baabaab{\underline b}abaababaabaababaababaabaaba}  & \a \\ 
		25 & 15 & {\tt baababaabaababaabaab{\underline b}abaababaabaaba}  & \a \\ 
		26 & 2 & {\tt baababaabaababaababaabaababaabaab{\underline b}a}  & \a \\ 
		27 & 23 & {\tt baababaabaab{\underline b}abaababaabaababaababaa}  & \a \\ 
		28 & 10 & {\tt baababaababaabaababaabaab{\underline b}abaababaa}  & \a \\ 
		29 & 31 & {\tt baab{\underline b}abaababaabaababaababaabaababaa}  & \a \\ 
		30 & 18 & {\tt babaabaababaabaab{\underline b}abaababaabaababaa}  & \a \\ 
		31 & 5 & {\tt babaabaababaababaabaababaabaab{\underline b}abaa}  & \a \\ 
		32 & 26 & {\tt babaabaab{\underline b}abaababaabaababaababaabaa}  & \a \\ 
		33 & 13 & {\tt babaababaabaababaabaab{\underline b}abaababaabaa}  & \a \\ 
		34 & 35 & {\tt {\underline b}abaababaabaababaababaabaababaabaab}  & \b \\ 
		35 & 34 & {\tt b{\underline b}abaababaabaababaababaabaababaabaa}  & \a \\
	\end{tabular}			
\end{minipage}%
\vline
\begin{minipage}{1\textwidth}
	\begin{tabular}{r r c l@{}}			
		\multicolumn{1}{p{.1cm}}{\centering BW \\ array}&   & \multicolumn{1}{p{6.3cm}}{\centering rotations of $v^{rev}=$ \\ {\tt  {\footnotesize {\underline b}baabaababaabaababaababaabaababaaba}}} & \multicolumn{1}{p{.1cm}}{ $\bwt(v^{rev})$} \\
		\hline
		1 & 3 & {\tt aabaababaabaababaababaabaababaaba{\underline b}b}  & \b \\ 
		2 & 11 & {\tt aabaababaababaabaababaaba{\underline b}baabaabab}  & \b \\ 
		3 & 24 & {\tt aabaababaaba{\underline b}baabaababaabaababaabab}  & \b \\ 
		4 & 6 & {\tt aababaabaababaababaabaababaaba{\underline b}baab}  & \b \\ 
		5 & 19 & {\tt aababaabaababaaba{\underline b}baabaababaabaabab}  & \b \\ 
		6 & 14 & {\tt aababaababaabaababaaba{\underline b}baabaababaab}  & \b \\ 
		7 & 27 & {\tt aababaaba{\underline b}baabaababaabaababaababaab}  & \b \\ 
		8 & 32 & {\tt aaba{\underline b}baabaababaabaababaababaabaabab}  & \b \\ 
		9 & 9 & {\tt abaabaababaababaabaababaaba{\underline b}baabaab}  & \b \\ 
		10 & 22 & {\tt abaabaababaaba{\underline b}baabaababaabaababaab}  & \b \\ 
		11 & 4 & {\tt abaababaabaababaababaabaababaaba{\underline b}ba}  & \a \\ 
		12 & 17 & {\tt abaababaabaababaaba{\underline b}baabaababaabaab}  & \b \\ 
		13 & 12 & {\tt abaababaababaabaababaaba{\underline b}baabaababa}  & \a \\ 
		14 & 25 & {\tt abaababaaba{\underline b}baabaababaabaababaababa}  & \a \\ 
		15 & 30 & {\tt abaaba{\underline b}baabaababaabaababaababaabaab}  & \b \\ 
		16 & 7 & {\tt ababaabaababaababaabaababaaba{\underline b}baaba}  & \a \\ 
		17 & 20 & {\tt ababaabaababaaba{\underline b}baabaababaabaababa}  & \a \\ 
		18 & 15 & {\tt ababaababaabaababaaba{\underline b}baabaababaaba}  & \a \\ 
		19 & 28 & {\tt ababaaba{\underline b}baabaababaabaababaababaaba}  & \a \\ 
		20 & 33 & {\tt aba{\underline b}baabaababaabaababaababaabaababa}  & \a \\ 
		21 & 35 & {\tt a{\underline b}baabaababaabaababaababaabaababaab}  & \b \\ 
		22 & 2 & {\tt baabaababaabaababaababaabaababaaba{\underline b}}  & {\underline \b} \\ 
		23 & 10 & {\tt baabaababaababaabaababaaba{\underline b}baabaaba}  & \a \\ 
		24 & 23 & {\tt baabaababaaba{\underline b}baabaababaabaababaaba}  & \a \\ 
		25 & 5 & {\tt baababaabaababaababaabaababaaba{\underline b}baa}  & \a \\ 
		26 & 18 & {\tt baababaabaababaaba{\underline b}baabaababaabaaba}  & \a \\ 
		27 & 13 & {\tt baababaababaabaababaaba{\underline b}baabaababaa}  & \a \\ 
		28 & 26 & {\tt baababaaba{\underline b}baabaababaabaababaababaa}  & \a \\ 
		29 & 31 & {\tt baaba{\underline b}baabaababaabaababaababaabaaba}  & \a \\ 
		30 & 8 & {\tt babaabaababaababaabaababaaba{\underline b}baabaa}  & \a \\ 
		31 & 21 & {\tt babaabaababaaba{\underline b}baabaababaabaababaa}  & \a \\ 
		32 & 16 & {\tt babaababaabaababaaba{\underline b}baabaababaabaa}  & \a \\ 
		33 & 29 & {\tt babaaba{\underline b}baabaababaabaababaababaabaa}  & \a \\ 
		34 & 34 & {\tt ba{\underline b}baabaababaabaababaababaabaababaa}  & \a \\ 
		35 & 1 & {\tt {\underline b}baabaababaabaababaababaabaababaaba}  & \a \\
	\end{tabular}
\end{minipage}
}\caption{BWT-matrices of the Fibonacci-plus word $v =s_{8}b$ of length $35$ and its reverse, underlined the added $b$.\label{fig:example_s8b}}
\end{figure}

The rest of this section is devoted to the proof of Proposition~\ref{prop:fibplusrev-r-even}. We will prove the case of even order $v$ only; an analogous argument proves the case of odd order $v$. Our proof is based on a detailed analysis of the structure of the BWT matrix of $v^{\rev}$. We will divide the BWT-matrix, and thus the BWT, into three parts, based on the positions of three specific conjugates of $v^{\rev}$, and analyse each of these separately.

Now consider the first few conjugates of $v^{\rev}$. Since $v = s_{2k}b = x_{2k}abb$, we have $v^{\rev} = bbax_{2k}$, noting that $x_{2k}$ is a palindrome. Thus 
\begin{align*}
\conj_1(v^{\rev}) & = bbax_{2k}, \\
\conj_2(v^{\rev}) & = bax_{2k}b, \\
\conj_3(v^{\rev}) & = ax_{2k}bb, \\
\conj_4(v^{\rev}) & = x_{2k}bba.
\end{align*}

Since Fibonacci words have no occurrence of $bb$, the conjugate $\conj_1(v^{\rev}) = v^{\rev}$ is the last row of the matrix. Moreover, by Prop.~\ref{prop:fib}, $ax_{2k}b$ is a Lyndon word, and therefore $\conj_3(v^{\rev})$, having only an extra $b$ at the end, is also Lyndon, and thus can be found in the first row. The relative order of the other two conjugates is also clear, since $x_{2k}$ begins with an $a$, thus we have 

\[ ax_{2k}bb < x_{2k}bba < bax_{2k}b < bbax_{2k}.\]

We will now subdivide the BWT-matrix into three parts, according to the positions of these conjugates, and we will call these \emph{top part}, \emph{middle part}, and \emph{bottom part}. The conjugates $ax_{2k}bb$, $x_{2k}bba$ and $bax_{2k}b$ are the first row of the top part, middle part and bottom part, respectively.  We use this to partition the BWT into the three corresponding parts $\bwt(v^{\rev})_{{\rm top}}, \bwt(v^{\rev})_{{\rm mid}},$ and $\bwt(v^{\rev})_{{\rm bot}}$. 
Thus we have 
\[\bwt(v^{\rev}) = \bwt(v^{\rev})_{{\rm top}}\cdot \bwt(v^{\rev})_{{\rm mid}}\cdot \bwt(v^{\rev})_{{\rm bot}}.\]

We will prove the form of the BWT of $v^{\rev}$ separately for the three parts. In Fig.~\ref{fig:stdplus-matrix} we give a visual presentation of the proof. 

	\begin{figure}\centering

\tikzset{
	pattern size/.store in=\mcSize, 
	pattern size = 5pt,
	pattern thickness/.store in=\mcThickness, 
	pattern thickness = 0.3pt,
	pattern radius/.store in=\mcRadius, 
	pattern radius = 1pt}
\makeatletter
\pgfutil@ifundefined{pgf@pattern@name@_su6fmrwbi}{
	\pgfdeclarepatternformonly[\mcThickness,\mcSize]{_su6fmrwbi}
	{\pgfqpoint{0pt}{0pt}}
	{\pgfpoint{\mcSize+\mcThickness}{\mcSize+\mcThickness}}
	{\pgfpoint{\mcSize}{\mcSize}}
	{
		\pgfsetcolor{\tikz@pattern@color}
		\pgfsetlinewidth{\mcThickness}
		\pgfpathmoveto{\pgfqpoint{0pt}{0pt}}
		\pgfpathlineto{\pgfpoint{\mcSize+\mcThickness}{\mcSize+\mcThickness}}
		\pgfusepath{stroke}
}}
\makeatother
\tikzset{every picture/.style={line width=0.75pt}} 
		\hspace*{-1in}\scalebox{1}{
		\begin{tikzpicture}[x=0.75pt,y=0.75pt,yscale=-1,xscale=1]
		

		\draw   (140,11) -- (490,11) -- (490,115.81) -- (140,115.81) -- cycle ;
		\draw   (140,410.36) -- (490,410.36) -- (490,558.5) -- (140,558.5) -- cycle ;
		\draw   (140,115.81) -- (490,115.81) -- (490,134.73) -- (140,134.73) -- cycle ;
		\draw  (140,31.92)  -- (491,31.92);
		
		\draw [pattern=_su6fmrwbi,pattern size=9pt,pattern thickness=0.75pt,pattern radius=0pt] (140,11) -- (290,11) -- (290,31.92) -- (140,31.92) -- cycle;
		\draw [pattern=_su6fmrwbi,pattern size=9pt,pattern thickness=0.75pt,pattern radius=0pt] (343,11) -- (490,11) -- (490,31.92) -- (343,31.92) -- cycle;
		
		\draw [pattern=_su6fmrwbi,pattern size=9pt,pattern thickness=0.75pt,pattern radius=0pt] (140,115.64) -- (290,115.64) -- (290,134.73) -- (140,134.73) -- cycle;
		\draw [pattern=_su6fmrwbi,pattern size=9pt,pattern thickness=0.75pt,pattern radius=0pt] (343,115.64) -- (490,115.64) -- (490,134.73) -- (343,134.73) -- cycle;

		\draw [pattern=_su6fmrwbi,pattern size=9pt,pattern thickness=0.75pt,pattern radius=0pt] (140,410.36) -- (290,410.36) -- (290,432) -- (140,432) -- cycle;
		\draw [pattern=_su6fmrwbi,pattern size=9pt,pattern thickness=0.75pt,pattern radius=0pt] (343,410.36) -- (490,410.36) -- (490,432) -- (343,432) -- cycle;
		\draw (140,432) -- (490,432);
		
		\draw [pattern=_su6fmrwbi,pattern size=9pt,pattern thickness=0.75pt,pattern radius=0pt] (140,535) -- (290,535) -- (290,558.5) -- (140,558.5) -- cycle;
		\draw [pattern=_su6fmrwbi,pattern size=9pt,pattern thickness=0.75pt,pattern radius=0pt] (343,535) -- (490,535) -- (490,558.5) -- (343,558.5) -- cycle;
		\draw (140,535) -- (490,535);


		\draw    (120,62.5) -- (96.5,62.27) ;
		\draw [shift={(94.5,62.25)}, rotate = 360.56] [color={rgb, 255:red, 0; green, 0; blue, 0 }  ][line width=0.75]    (10.93,-3.29) .. controls (6.95,-1.4) and (3.31,-0.3) .. (0,0) .. controls (3.31,0.3) and (6.95,1.4) .. (10.93,3.29)   ;
		\draw    (120.5,10.5) -- (119.5,114.75) ;
		\draw    (119.5,114.75) -- (133,114.75) ;
		\draw    (120.5,10.5) -- (134,10.5) ;
		\draw    (118.5,256.79) -- (95,256.17) ;
		\draw [shift={(93,256.12)}, rotate = 361.5] [color={rgb, 255:red, 0; green, 0; blue, 0 }  ][line width=0.75]    (10.93,-3.29) .. controls (6.95,-1.4) and (3.31,-0.3) .. (0,0) .. controls (3.31,0.3) and (6.95,1.4) .. (10.93,3.29)   ;
		\draw    (119,117.56) -- (118,409.86) ;
		\draw    (118,409.86) -- (131.5,409.86) ;
		\draw    (119,117.5) -- (132.5,117.5) ;
		\draw    (117.5,484.92) -- (94,484.6) ;
		\draw [shift={(92,484.57)}, rotate = 360.78999999999996] [color={rgb, 255:red, 0; green, 0; blue, 0 }  ][line width=0.75]    (10.93,-3.29) .. controls (6.95,-1.4) and (3.31,-0.3) .. (0,0) .. controls (3.31,0.3) and (6.95,1.4) .. (10.93,3.29)   ;
		\draw    (118,412.22) -- (117,557.98) ;
		\draw    (117,557.98) -- (130.5,557.98) ;
		\draw    (118,412.22) -- (131.5,412.22) ;
		\draw    (520.5,62.24) -- (544,62.49) ;
		\draw [shift={(546,62.51)}, rotate = 180.6] [color={rgb, 255:red, 0; green, 0; blue, 0 }  ][line width=0.75]    (10.93,-3.29) .. controls (6.95,-1.4) and (3.31,-0.3) .. (0,0) .. controls (3.31,0.3) and (6.95,1.4) .. (10.93,3.29)   ;
		\draw    (519.96,114.24) -- (521.04,10) ;
		\draw    (519.96,114.24) -- (506.46,114.24) ;
		\draw    (521.04,10) -- (507.54,9.99) ;
		\draw    (519.83,125.72) -- (543.33,125.76) ;
		\draw [shift={(545.33,125.77)}, rotate = 180.1] [color={rgb, 255:red, 0; green, 0; blue, 0 }  ][line width=0.75]    (10.93,-3.29) .. controls (6.95,-1.4) and (3.31,-0.3) .. (0,0) .. controls (3.31,0.3) and (6.95,1.4) .. (10.93,3.29)   ;
		\draw    (519.67,133.67) -- (520.04,116.99) ;
		\draw    (519.67,133.67) -- (506.17,133.67) ;
		\draw    (520.04,116.99) -- (506.54,116.99) ;
		\draw    (520.19,166.58) -- (543.68,166.64) ;
		\draw [shift={(545.68,166.64)}, rotate = 180.15] [color={rgb, 255:red, 0; green, 0; blue, 0 }  ][line width=0.75]    (10.93,-3.29) .. controls (6.95,-1.4) and (3.31,-0.3) .. (0,0) .. controls (3.31,0.3) and (6.95,1.4) .. (10.93,3.29)   ;
		\draw    (520.33,179.67) -- (520.04,153.49) ;
		\draw    (520.33,179.67) -- (506.83,179.66) ;
		\draw    (520.04,153.49) -- (506.54,153.49) ;
		\draw    (521.33,243.34) -- (544.83,243.49) ;
		\draw [shift={(546.83,243.5)}, rotate = 180.36] [color={rgb, 255:red, 0; green, 0; blue, 0 }  ][line width=0.75]    (10.93,-3.29) .. controls (6.95,-1.4) and (3.31,-0.3) .. (0,0) .. controls (3.31,0.3) and (6.95,1.4) .. (10.93,3.29)   ;
		\draw    (520.8,274.15) -- (521,214.07) ;
		\draw    (520.8,274.15) -- (507.3,274.14) ;
		\draw    (521,214.07) -- (507.5,214.06) ;
		\draw    (519,348.59) -- (542.5,348.78) ;
		\draw [shift={(544.5,348.79)}, rotate = 180.46] [color={rgb, 255:red, 0; green, 0; blue, 0 }  ][line width=0.75]    (10.93,-3.29) .. controls (6.95,-1.4) and (3.31,-0.3) .. (0,0) .. controls (3.31,0.3) and (6.95,1.4) .. (10.93,3.29)   ;
		\draw    (520,390.93) -- (519.54,308.58) ;
		\draw    (520,390.93) -- (506.5,390.93) ;
		\draw    (519.54,308.58) -- (506.04,308.57) ;
		\draw    (519,497.28) -- (542.5,497.57) ;
		\draw [shift={(544.5,497.6)}, rotate = 180.71] [color={rgb, 255:red, 0; green, 0; blue, 0 }  ][line width=0.75]    (10.93,-3.29) .. controls (6.95,-1.4) and (3.31,-0.3) .. (0,0) .. controls (3.31,0.3) and (6.95,1.4) .. (10.93,3.29)   ;
		\draw    (519.67,557.71) -- (519.54,435.77) ;
		\draw    (519.67,557.71) -- (506.17,557.7) ;
		\draw    (519.54,435.77) -- (506.04,435.75) ;

		\draw (490,134.79) -- (490,193.25); 
		\draw (490,210.63) -- (490,289.5) ;
		\draw (490,304.5) -- (490, 410.36);
		
		\draw (140,134.79) -- (140,193.25); 
		\draw (140,210.63) -- (140,289.5) ;
		\draw (140,304.5) -- (140, 410.36);

		\draw (140, 193.25) -- (140, 210.63);
		\draw (140, 289.5) -- (140,304.5 );
		
		\draw  (490, 193.25) -- (490, 210.63);
		\draw (490, 289.5) -- (490,304.5 );
		
		\draw (140,193.25) -- (490, 193.25);
		\draw(140,210.63) --(490,210.63);
		\draw (140,289.5) -- (490,289.5);
		\draw (140,304.5)--(490,304.5);
		

		\draw    (140,276.13) -- (239.33,276) ;
		\draw    (239.33,210) -- (239.33,276) ;
		\draw  [dashed]  (139.5,190.88) -- (277.25,190.88) ;
		\draw  [dashed]  (276,178.25) -- (276,190.88) ;
		\draw    (140,175) -- (326.5,175) ;
		\draw    (326.5,156) -- (326.5,175) ;
		\draw (140,154.5) -- (326.5,154.5);
		\draw  [dashed]  (140,406) -- (159,406.25) ;
		\draw  [dashed]  (159,406.25) -- (159,389.25) ;
		\draw  [dashed]  (140,286.75) -- (216.5,286.75) ;
		\draw  [dashed]  (216.5,275.25) -- (216.5,286.75) ;
		\draw  [dashed]  (140,152) -- (388.5,152) ;
		\draw  [dashed]  (388.5,136.25) -- (388.5,152) ;
		\draw    (140,390) -- (187,390) ;
		\draw    (187,305) -- (187,390) ;
		
		\draw (498,119) node [anchor=north west][inner sep=0.75pt]   [align=left] {a};
		\draw (295,120) node [anchor=north west][inner sep=0.75pt]   [align=left] {$x_{2k}$bba};
		\draw (144,138) node [anchor=north west][inner sep=0.75pt]   [align=left] {$x_{2k-1}$};
		\draw (390,138) node [anchor=north west][inner sep=0.75pt]   [align=left] {bb};
		
		\draw (144,159) node [anchor=north west][inner sep=0.75pt]   [align=left] {$x_{2k-2}$b};
		\draw (327,166) node [anchor=north west][inner sep=0.75pt]   [align=left] {b};
		\draw (327,155) node [anchor=north west][inner sep=0.75pt]   [align=left] {a};
		\draw (144,180) node [anchor=north west][inner sep=0.75pt]   [align=left] {$x_{2k-3}$};
		\draw (144,240) node [anchor=north west][inner sep=0.75pt]   [align=left] {$x_{2(k-i)}$b};
		\draw (144,277) node [anchor=north west][inner sep=0.75pt]   [align=left] {$x_{2(k-i)-1}$};
		\draw (144,350) node [anchor=north west][inner sep=0.75pt]   [align=left] {$x_{4}$b};
		\draw (144,395) node [anchor=north west][inner sep=0.75pt]   [align=left] {$x_{3}$};
		\draw (278.52,180) node [anchor=north west][inner sep=0.75pt]   [align=left] {bb};
		\draw (242,213) node [anchor=north west][inner sep=0.75pt]   [align=left] {a};
		\draw (242,265) node [anchor=north west][inner sep=0.75pt]   [align=left] {b};
		\draw (244,221) node [anchor=north west][inner sep=0.75pt]   [align=left] {\vdots};
		\draw (145,186) node [anchor=north west][inner sep=0.75pt]   [align=left] {\vdots};
		
		\draw (165,393) node [anchor=north west][inner sep=0.75pt]   [align=left] {bb};
		\draw (189,375) node [anchor=north west][inner sep=0.75pt]   [align=left] {b};
		\draw (192,325) node [anchor=north west][inner sep=0.75pt]   [align=left] {\vdots};
		\draw (189,360.73) node [anchor=north west][inner sep=0.75pt]   [align=left] {a};
		\draw (189.9,308) node [anchor=north west][inner sep=0.75pt]   [align=left] {a};
		\draw (242,250) node [anchor=north west][inner sep=0.75pt]   [align=left] {a};
		\draw (217.2,277) node [anchor=north west][inner sep=0.75pt]   [align=left] {bb};
		\draw (295,15) node [anchor=north west][inner sep=0.75pt]   [align=left] {a$x_{2k}$bb};
		\draw (498,15) node [anchor=north west][inner sep=0.75pt]   [align=left] {b};
		\draw (498,35) node [anchor=north west][inner sep=0.75pt]   [align=left] {b};
		\draw (498,52) node [anchor=north west][inner sep=0.75pt]   [align=left] {b};
		\draw (498,65) node [anchor=north west][inner sep=0.75pt]   [align=left] {\vdots};
		\draw (498,96.51) node [anchor=north west][inner sep=0.75pt]   [align=left] {b};
		\draw (498,137) node [anchor=north west][inner sep=0.75pt]   [align=left] {b};
		\draw (498,154) node [anchor=north west][inner sep=0.75pt]   [align=left] {a};
		\draw (498,168) node [anchor=north west][inner sep=0.75pt]   [align=left] {a};
		\draw (498,180) node [anchor=north west][inner sep=0.75pt]   [align=left] {b};
		\draw (498,218) node [anchor=north west][inner sep=0.75pt]   [align=left] {a};
		\draw (498,265) node [anchor=north west][inner sep=0.75pt]   [align=left] {a};
		\draw (498,230) node [anchor=north west][inner sep=0.75pt]   [align=left] {\vdots};
		\draw (498,277) node [anchor=north west][inner sep=0.75pt]   [align=left] {b};
		\draw (498,312) node [anchor=north west][inner sep=0.75pt]   [align=left] {a};
		\draw (498,377) node [anchor=north west][inner sep=0.75pt]   [align=left] {a};
		\draw (498,336) node [anchor=north west][inner sep=0.75pt]   [align=left] {\vdots};
		\draw (498,390.64) node [anchor=north west][inner sep=0.75pt]   [align=left] {b};
		\draw (498.71,187) node [anchor=north west][inner sep=0.75pt]   [align=left] {\vdots};
		\draw (56.5,62.25) node   [align=left] {\begin{minipage}[lt]{47.6pt}\setlength\topsep{0pt}
			Top part
			\end{minipage}};
		\draw (52.5,257) node   [align=left] {\begin{minipage}[lt]{68pt}\setlength\topsep{0pt}
			Middle part
			\end{minipage}};
		\draw (50,487) node   [align=left] {\begin{minipage}[lt]{68pt}\setlength\topsep{0pt}
			Bottom part
			\end{minipage}};
		\draw (551.21,56.5) node [anchor=north west][inner sep=0.75pt]   [align=left] {$F_{2k-1} -k+1$};
		\draw (550.71,120) node [anchor=north west][inner sep=0.75pt]   [align=left] {$F_{0}$};
		\draw (550.71,162) node [anchor=north west][inner sep=0.75pt]   [align=left] {$F_{2}$};
		\draw (550.71,187) node [anchor=north west][inner sep=0.75pt]   [align=left] {\vdots};
		\draw (550.71,238) node [anchor=north west][inner sep=0.75pt]   [align=left] {$F_{2i}$};
		\draw (550.71,283 ) node [anchor=north west][inner sep=0.75pt]   [align=left] {\vdots};
		\draw (550.71,342) node [anchor=north west][inner sep=0.75pt]   [align=left] {$F_{2k-4}$};
		\draw (550.71,492) node [anchor=north west][inner sep=0.75pt]   [align=left] {$F_{2k-2}$};
		\draw (145,282) node [anchor=north west][inner sep=0.75pt]   [align=left] {\vdots};
		\draw (498,415) node [anchor=north west][inner sep=0.75pt]   [align=left] {b};
		\draw (498.71,283) node [anchor=north west][inner sep=0.75pt]   [align=left] {\vdots};
		\draw (295,415) node [anchor=north west][inner sep=0.75pt]   [align=left] {ba$x_{2k}$b};
		\draw (498,440) node [anchor=north west][inner sep=0.75pt]   [align=left] {a};
		\draw (498,521) node [anchor=north west][inner sep=0.75pt]   [align=left] {a};
		\draw (498,540) node [anchor=north west][inner sep=0.75pt]   [align=left] {a};
		\draw (498,467) node [anchor=north west][inner sep=0.75pt]   [align=left] {\vdots};
		\draw (144,440) node [anchor=north west][inner sep=0.75pt]   [align=left] {b};
		\draw (144,518) node [anchor=north west][inner sep=0.75pt]   [align=left] {b};
		\draw (144,473) node [anchor=north west][inner sep=0.75pt]   [align=left] {\vdots};
		\draw (295,540) node [anchor=north west][inner sep=0.75pt]   [align=left] {bba$x_{2k}$};
		\end{tikzpicture}

		}
		\caption{A sketch of the BWT-matrix of $v^{\rev}$ where $v$ is a Fibonacci-plus word.\label{fig:stdplus-matrix} }
	\end{figure}
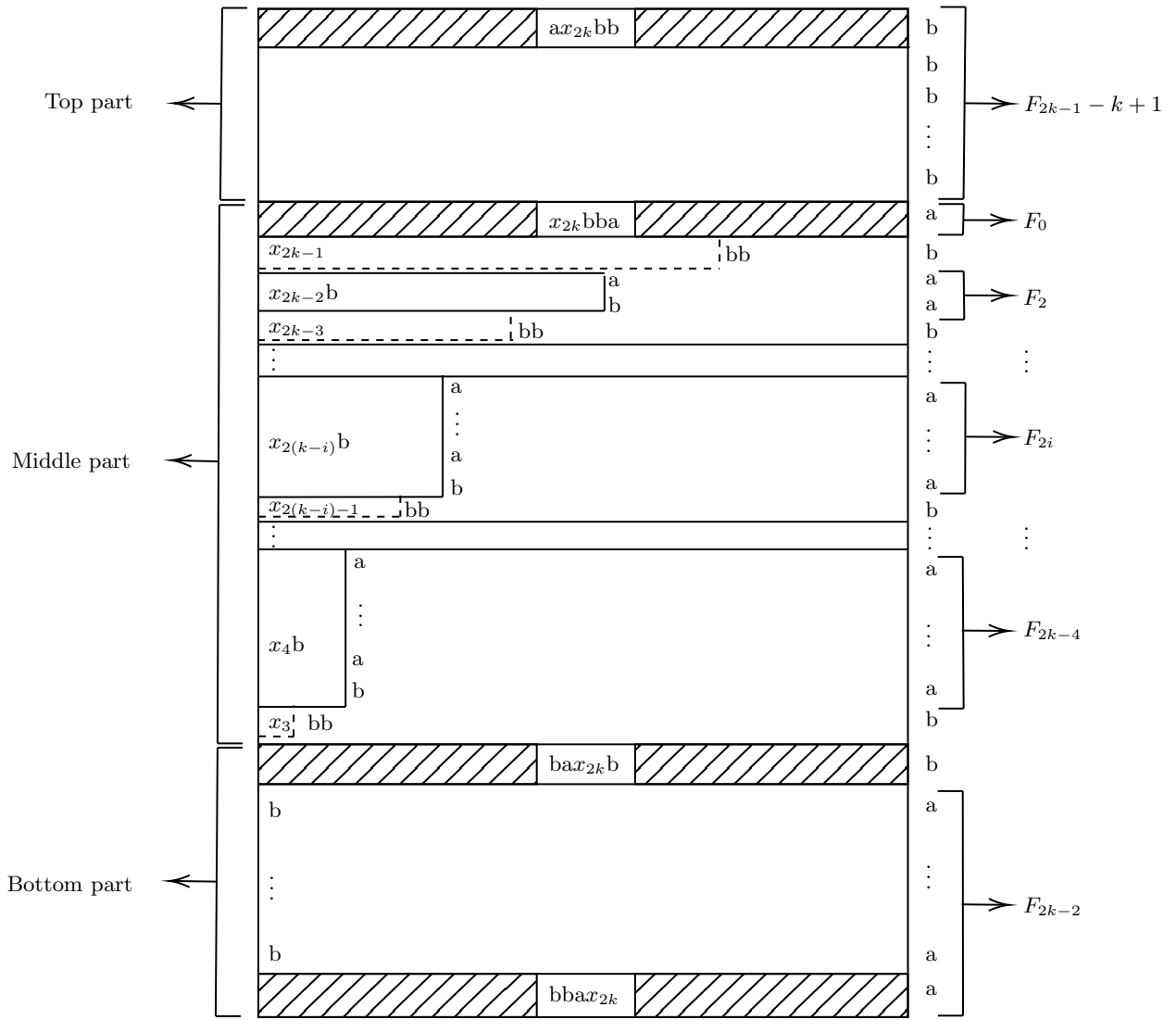


\subsection{Bottom part} 

\begin{proposition}\label{prop:bottom} 
$\bwt(v^{\rev})_{{\rm bot}} = ba^{F_{2k-2}}$. 
\end{proposition}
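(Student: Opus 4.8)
The plan is to identify the bottom part explicitly as the set of all conjugates of $v^{\rev}$ beginning with $b$, and then to read off their cyclically preceding characters. Since every conjugate beginning with $a$ is lexicographically smaller than every conjugate beginning with $b$, the $b$-conjugates form a contiguous block of rows at the bottom of the matrix, and their number equals $|v^{\rev}|_b$. From $v^{\rev}=bbax_{2k}$, using that $x_{2k}$ is a prefix of $s_{2k}$ (hence contains no factor $bb$) and is a palindrome starting with $a$ (hence ends with $a$), I get $|v^{\rev}|_b = 2 + (F_{2k-2}-1) = F_{2k-2}+1$, which already matches the length of the claimed string $ba^{F_{2k-2}}$.

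Next I would compute the BWT characters of these rows, i.e.\ the letters cyclically preceding each $b$. The key observation is that $v^{\rev}=bbax_{2k}$ has exactly one occurrence of $bb$ as a circular factor, namely the leading one: $x_{2k}$ has no $bb$, the junction after position $3$ reads $aa$, and the letter cyclically preceding position $1$ is the final letter of $x_{2k}$, which is $a$. Consequently exactly one $b$ is cyclically preceded by a $b$, namely the second letter of $v^{\rev}$, whose conjugate is $\conj_2(v^{\rev})=bax_{2k}b$; every other $b$ is preceded by $a$ (the leading $b$ is preceded by the final $a$ of $x_{2k}$, and each $b$ inside $x_{2k}$ is preceded by $a$, since $x_{2k}$ starts with $a$ and contains no $bb$). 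Hence the bottom block contributes exactly one $b$ and $F_{2k-2}$ occurrences of $a$ to the BWT.

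It remains to locate the single $b$ at the top of the block, i.e.\ to prove that $\conj_2(v^{\rev})=bax_{2k}b$ is the lexicographically smallest conjugate starting with $b$; this is the main obstacle. All $b$-conjugates except the maximal one $\conj_1(v^{\rev})=bbax_{2k}$ (which starts with $bb$) begin with $ba$, so it suffices to compare $\conj_2$ against an arbitrary $b$-conjugate $C_p$ whose starting $b$ lies inside $x_{2k}$. Writing $x_{2k}=\alpha\, b\,\beta$ for the corresponding factorization, one has $\conj_2 = b\,(ax_{2k}b)$ and $C_p = b\,(\beta bba\alpha)$, so it is enough to show $ax_{2k}b < \beta bba\alpha$. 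Here I would invoke Proposition~\ref{prop:fib}(3): the word $u:=ax_{2k}b$ is Lyndon, hence strictly smaller than each of its proper suffixes. Since $u = a\alpha\, b\,\beta b$, the string $\beta b$ is a proper suffix of $u$, whence $u < \beta b$; and $\beta b$ is a proper prefix of $\beta bba\alpha$, whence $\beta b < \beta bba\alpha$. By transitivity $u < \beta bba\alpha$, as required.

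Combining the three steps, the bottom block consists of $F_{2k-2}+1$ rows, it is headed by $\conj_2(v^{\rev})=bax_{2k}b$ whose BWT character is $b$, and all remaining $F_{2k-2}$ rows carry the character $a$; therefore $\bwt(v^{\rev})_{{\rm bot}} = ba^{F_{2k-2}}$. The only delicate point is the lexicographic comparison in the third paragraph, and the Lyndon property of $ax_{2k}b$ reduces it to a one-line suffix/prefix argument; everything else is bookkeeping on the structure of $v^{\rev}=bbax_{2k}$.
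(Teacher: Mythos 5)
Your proof is correct and follows essentially the same route as the paper's: both identify the bottom part as exactly the conjugates beginning with $b$, count them as $F_{2k-2}+1$ via $|v^{\rev}|_b$, use the absence of $bb$ in $s_{2k}$ to show $\conj_2(v^{\rev})=bax_{2k}b$ is the unique $b$-conjugate ending in $b$, and pin it as the smallest $b$-conjugate via the Lyndon property from Prop.~\ref{prop:fib}(3). The only cosmetic difference is that the paper deduces this minimality in one line from $ax_{2k}bb$ being the Lyndon (hence globally smallest) conjugate of $v^{\rev}$, whereas you make the same point explicit through the suffix characterization of Lyndon words applied to $ax_{2k}b$.
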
 

\begin{proof} By definition, the bottom part starts with the conjugate $\conj_{2}(v) = bax_{2k}b$. Since $ax_{2k}bb$ is Lyndon (Prop.~\ref{prop:fib}, part 3), it is smaller than all other conjugates, and therefore, $bax_{2k}b$ is smaller than all other conjugates starting with $b$. Thus, the bottom part consists exactly of all conjugates starting with $b$. 
The number of $b$'s in $v$, and thus in $v^{\rev}$ is $F_{2k-2}+1$. Since $s_{2k}$ has no occurrence of $bb$, every $b$ in $v^{\rev}$ except the one in position $2$ is preceded by an $a$, thus $bax_{2k}b$ is the only conjugate ending in $b$. This proves the claim. 
\end{proof}


\subsection{Middle part} 

\begin{lemma}\label{lemma:leftspecial} 
The left-special circular factors of $v^{\rev}$ are exactly the prefixes of $x_{2k-1}b$ and the prefixes of $bax_{2k-2}$. 
\end{lemma}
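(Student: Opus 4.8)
The plan is to exploit that $bb$ occurs exactly once as a circular factor of $v^{\rev}=bbax_{2k}$: indeed $s_{2k}$, and hence the palindrome $x_{2k}$, has no occurrence of $bb$, and across the wrap-around the word reads $\ldots x_{2k}\,bb\ldots = \ldots a\,bb\ldots$. First I would record two elementary facts. Every prefix of a left-special factor is again left-special (if $au$ and $bu$ occur circularly, then so do $ap$ and $bp$ for every prefix $p$ of $u$), so the left-special factors form a prefix-closed set, i.e.\ a binary trie rooted at $\epsilon$. Moreover, any circular factor containing $bb$ occurs exactly once, is therefore preceded by a single letter, and cannot be left-special; hence every left-special factor is $bb$-free. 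The $bb$-free circular factors of $v^{\rev}$ are exactly the factors of the linear word $bax_{2k}b$, which by Prop.~\ref{prop:fib}(2) (from $s_{2k+1}=x_{2k-1}bax_{2k}ba$) is itself a factor of the infinite Fibonacci word $f$; so every left-special factor is a factor of $f$. The empty word is left-special and is a prefix of both claimed words, and the two words start with $a$ and $b$ respectively, so it remains to describe the two branches of the trie issuing from $\epsilon$.

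For the $a$-branch, let $u$ be a left-special factor starting with $a$. Then $au$ and $bu$ are $bb$-free, hence factors of $f$, so $u$ is left-special in $f$; by the classical fact that the left-special factors of the characteristic Fibonacci word are precisely its prefixes, $u$ is a prefix of $f$, and the $a$-initial left-special factors therefore form a single chain of prefixes of $f$. Using $x_{2k}=x_{2k-1}bax_{2k-2}=x_{2k-2}abx_{2k-1}$ (Prop.~\ref{prop:fib}(2)), the block $x_{2k-1}$ occurs in $v^{\rev}=bba\,x_{2k}$ both as the prefix of $x_{2k}$ (preceded by the $a$ of $bba$ and followed by $ba$) and as the suffix of $x_{2k}$ (preceded by $b$, and followed by $bb$ through the wrap-around). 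The first occurrence shows that every prefix of $x_{2k-1}b$ occurs preceded by $a$, the second that it occurs preceded by $b$; hence all prefixes of $x_{2k-1}b$ are left-special. The two occurrences diverge right after $x_{2k-1}b$, so $x_{2k-1}ba=s_{2k-1}$ is preceded by $a$ in the only relevant occurrence; granting the occurrence count discussed below, $s_{2k-1}$ is not left-special and the $a$-branch stops exactly at $x_{2k-1}b$.

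For the $b$-branch, let $u$ be left-special and start with $b$. Then $bu$ occurs and begins with $bb$; since $bb$ is unique and sits at the front of $v^{\rev}=bbax_{2k}$, the occurrence of $bu$ must align there, forcing $u$ to be a prefix of $bax_{2k}$. Thus the entire $b$-branch is a single chain of prefixes of $bax_{2k}$, and it only remains to find the longest such prefix that also occurs preceded by $a$. Using $x_{2k}=x_{2k-1}bax_{2k-2}$ again, the block $bax_{2k-2}$ occurs inside $x_{2k}$ immediately after $x_{2k-1}$, preceded by the terminal $a$ of the palindrome $x_{2k-1}$; together with the front occurrence (preceded by $b$), this makes every prefix of $bax_{2k-2}$ left-special. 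In this internal occurrence $bax_{2k-2}$ is followed by the wrap-around $b$, not by $a$, so its continuation $bax_{2k-2}a$ is not preceded by $a$; granting again the occurrence count below, $bax_{2k-2}a$ is not left-special and the $b$-branch stops exactly at $bax_{2k-2}$.

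The step I expect to be the real work is the maximality claims used above: to know that $s_{2k-1}$ (resp.\ $bax_{2k-2}a$) truly fails to be left-special, one must rule out every circular occurrence of $x_{2k-1}$ (resp.\ of $bax_{2k-2}$) in $v^{\rev}$ other than the two already exhibited, since a stray third occurrence could supply the missing left-extension. I would obtain the complete list of occurrences of these blocks by unfolding the recursive factorizations of $x_{2k}$ in Prop.~\ref{prop:fib}(2), and use the balancedness property (Prop.~\ref{prop:fib}(4)) to exclude the overlapping configurations that any extra occurrence would force. Everything else in the argument is either the clean uniqueness-of-$bb$ dichotomy or the mere existence of the two canonical occurrences, both of which the factorizations supply immediately.
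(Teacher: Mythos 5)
Your overall architecture is sound, and in one respect it takes a genuinely different route from the paper: to handle the $a$-branch you reduce to the infinite Fibonacci word, via the nice observation that the $bb$-free circular factors of $v^{\rev}$ are exactly the factors of the linear word $bax_{2k}b$, which sits inside $s_{2k+1}=x_{2k-1}bax_{2k}ba$, and then you invoke the classical fact that the left-special factors of the characteristic word are its prefixes. The paper instead cites the Borel--Reutenauer characterization: for each length $h\leq F_{2k}-2$ there is exactly one left-special circular factor of $bax_{2k}$, and it is a prefix of $x_{2k}$; this one citation short-circuits most of your two-branch trie analysis. Your unique-$bb$ alignment argument for the $b$-branch (any left-special $u$ starting with $b$ has $bu$ beginning with $bb$, forcing $u$ to be a prefix of $bax_{2k}$) is clean and makes explicit what the paper leaves implicit. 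The exhibition of the two occurrences of $x_{2k-1}$ and of $bax_{2k-2}$, establishing left-specialness of all prefixes of $x_{2k-1}b$ and of $bax_{2k-2}$, is correct.

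The gap is exactly where you flag it, and the counting target you commit to there is the wrong one. You propose to prove that $x_{2k-1}$ has \emph{only} the two exhibited circular occurrences, but this is false at the smallest admissible order $k=2$: there $x_3=a$ and $v^{\rev}=bbaaba$, in which $a$ occurs three times (positions $3$, $4$, $6$), the extra occurrence being the $a$ of the prefix $bba$. The lemma survives because that third occurrence is followed by $a$ and so contributes nothing to extensions of $x_3b$ --- which shows that what your maximality step actually needs, and what the paper counts, are occurrences of the \emph{extended} words: $x_{2k-1}ba$ occurs exactly once (so $s_{2k-1}=x_{2k-1}ba$ is preceded by a single letter, stopping the $a$-branch at $x_{2k-1}b$), and $bax_{2k-2}$ occurs exactly twice, once followed by $a$ and once by $b$ (stopping the $b$-branch at $bax_{2k-2}$). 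These counts can be extracted the way the paper does elsewhere, e.g.\ from the fact that $x_{2k}$ has exactly two circular occurrences in $s_{2k}$ (proof of Prop.~\ref{prop:fibplus-r}) or from Lemma~\ref{lemma:occsof_x}; balancedness alone, as you suggest, will not give them directly. Since the occurrence-count step is both deferred and, as stated, falsified by a boundary case, the proposal is incomplete at precisely its decisive point; with the counting targets corrected, the rest of your argument assembles into a valid proof.
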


\begin{proof}
Let $u$ be a left-special circular factor of $v^{\rev}=bbax_{2k}$. From Proposition \ref{prop:fib}, $v^{\rev}=bbax_{2k-1}bax_{2k-2}=bbax_{2k-2}abx_{2k-1}$. Since $bb$ occurs only once, $u$ does not contain $bb$ as factor. Moreover, from combinatorial properties of standard words (see \cite{BorelR06}), it is known that for each $0\leq h\leq F_{2k}-2$, there is exactly one left-special circular factor of $bax_{2k}$ having length $h$ and it a prefix of $x_{2k}$. Since $x_{2k-1}ba$ (that is a prefix of $x_{2k}$) occurs exactly once in $v^{\rev}$ and $bax_{2k-2}$ has exactly two occurrences (one preceded by $b$ and followed by $a$, the other one preceded by $a$ and followed by $b$), either $u$ is prefix of $x_{2k-1}b$ or it is prefix of $bax_{2k-2}$. 
\end{proof}

\begin{lemma}\label{lemma:occsof_x}
Let $s_{2k}$ be a Fibonacci word of even order. 
Then, for all $i=0,\ldots,k-2$, $ax_{2(k-i)}b$ and $ax_{2(k-i)-1}b$ have $F_{2i}$ and $F_{2i+1}$ occurrences, respectively, as circular factors of $s_{2k}$. 
\end{lemma}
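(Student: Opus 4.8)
The plan is to prove the equivalent, \emph{unified} statement that for all $m$ and all $j$ with $2\le j\le m$, the word $a x_j b$ occurs exactly $F_{m-j}$ times as a circular factor of $s_m$; the two assertions of the lemma are the instances $m=2k,\ j=2(k-i)$ (giving $F_{2i}$) and $m=2k,\ j=2(k-i)-1$ (giving $F_{2i+1}$). Observe that $|a x_j b|=F_j$. I would run a downward induction on $j$ driven by the Fibonacci morphism $\phi\colon a\mapsto ab,\ b\mapsto a$, for which $\phi(s_n)=s_{n+1}$ and hence $s_m=\phi(s_{m-1})$.

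The engine of the induction is the single \emph{desubstitution} identity
\[
\#\{\text{circular occurrences of } a x_j b \text{ in } s_m\}=\#\{\text{circular occurrences of } a x_{j-1} b \text{ in } s_{m-1}\},\qquad 3\le j\le m.
\]
To prove it I would first record the algebraic fact $x_j=\phi(x_{j-1})\,a$ for all $j\ge 3$, which is immediate from $\phi(s_{j-1})=s_j$ together with the forms $s_j=x_jab$ and $s_j=x_jba$ of Prop.~\ref{prop:fib}(1) (in both parities one peels off the last two letters). This rewrites the target as a clean morphic image:
\[
a x_j b = a\,\phi(x_{j-1})\,ab = \phi(b)\,\phi(x_{j-1})\,\phi(a) = \phi(b x_{j-1} a).
\]
The combinatorial heart is then a block-alignment argument. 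Reading $s_m=\phi(s_{m-1})$ as a concatenation of the blocks $\phi(a)=ab$ and $\phi(b)=a$, one checks that in $s_m$ a letter $b$ occurs only as the second letter of a $\phi(a)$-block, and that the pattern $aa$ occurs only across a block boundary whose left side is a $\phi(b)$-block. Since $a x_j b$ begins with $aa$ (as $x_j$ begins with $a$ for $j\ge 3$) and ends with $b$, every occurrence of $a x_j b$ must start at the beginning of a $\phi(b)$-block and finish at the end of a $\phi(a)$-block; hence it is precisely the $\phi$-image of an occurrence of its preimage $b x_{j-1} a$ in $s_{m-1}$. As $\phi$ is injective and block-preserving, this yields a bijection between the two occurrence sets. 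Finally I would replace $b x_{j-1} a$ by $a x_{j-1} b$ via reversal: $b x_{j-1} a=(a x_{j-1} b)^{\rev}$ because $x_{j-1}$ is a palindrome, and since $s_{m-1}^{\rev}$ is a conjugate of the standard word $s_{m-1}$, the circular occurrence counts of a word and of its reverse coincide.

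For the base case $j=2$ we have $x_2=\epsilon$, so $a x_2 b=ab$, and $ab$ occurs exactly $|s_m|_b=F_{m-2}$ times circularly (each $b$ is preceded by an $a$, as $s_m$ contains no factor $bb$), which is $F_{m-j}$ for $j=2$. Iterating the desubstitution identity $j-2$ times sends $(m,j)\mapsto(m-1,j-1)\mapsto\cdots\mapsto(m-j+2,\,2)$, and the base then gives $F_{(m-j+2)-2}=F_{m-j}$. Specializing $m=2k$ yields the lemma.

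I expect the main obstacle to be making the block-alignment/desubstitution bijection fully rigorous in the \emph{circular} setting: one must argue that the block decomposition of $\phi(s_{m-1})$ is well defined across the wrap-around seam and that no occurrence of $a x_j b$ straddling the seam is either lost or spuriously created. The cleanest remedy is that circular occurrence counts are invariant under conjugacy, so before applying the alignment argument one may rotate $s_{m-1}$ (equivalently $s_m$) to a position where the relevant factor does not cross the seam; the parity bookkeeping and the identity $x_j=\phi(x_{j-1})a$ are then routine.
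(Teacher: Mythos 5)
Your proof is correct, but it takes a genuinely different route from the paper's. The paper fixes the ambient word $s_{2k}$ and runs an \emph{upward} induction on $i$: the base case $i=0$ observes that $ax_{2k}b$ and $ax_{2k-1}b$ each occur exactly once, and the inductive step extracts the Fibonacci recurrence from containments among the patterns themselves --- $ax_{2(k-i)-2}b$ occurs as a suffix of both $ax_{2(k-i)}b$ and $ax_{2(k-i)-1}b$, these occurrences are distinct since the latter word is not a suffix of the former, giving $F_{2i}+F_{2i+1}=F_{2i+2}$ occurrences, and symmetrically with prefixes for $ax_{2(k-i)-3}b$, giving $F_{2i+1}+F_{2i+2}=F_{2i+3}$. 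You instead prove the stronger uniform statement that $ax_jb$ occurs $F_{m-j}$ times circularly in $s_m$ for all $2\le j\le m$, by a \emph{downward} induction that changes the ambient word at each step: desubstitution along the Fibonacci morphism via $ax_jb=\phi(bx_{j-1}a)$ (your identity $x_j=\phi(x_{j-1})a$ does hold in both parities, by peeling the last two letters off $s_j=\phi(s_{j-1})$ using Prop.~\ref{prop:fib}(1)), followed by the reversal trade $bx_{j-1}a=(ax_{j-1}b)^{\rev}$ using palindromicity of $x_{j-1}$ and the fact that $s_{m-1}^{\rev}$ is a conjugate of $s_{m-1}$. Your block-alignment bijection is sound: $\{\phi(a),\phi(b)\}=\{ab,a\}$ is a suffix code, every $b$ ends a $\phi(a)$-block, and $aa$ forces a $\phi(b)$-block on its left, so for $j\ge 3$ (where $ax_jb$ starts with $aa$) every occurrence is exactly block-aligned; the seam issue you flag yourself is real but correctly dispatched, since $\phi$ maps conjugates of $s_{m-1}$ to conjugates of $s_m$ and circular occurrence counts are conjugacy-invariant. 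Comparing the two: the paper's argument is shorter and entirely internal to one word, avoiding morphic machinery and circular-decoding technicalities, though as written it leaves implicit the converse that every occurrence of the shorter pattern extends to an occurrence of one of the two longer ones (which is what upgrades the count from a lower bound to an equality); your route handles both parities and all $j$ in a single chain terminating at the trivial base $ab$, and the desubstitution technique would generalize beyond Fibonacci to standard words with arbitrary directive sequences (using the morphisms attached to the digits), which is exactly the setting of the paper's subsequent section on standard-plus words.
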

\begin{proof}
The statement can be proved by induction on $i$. For $i=0$, the statement follows from the fact that $ax_{2k}b$ and $ax_{2k-1}b$ have just $1=F_0=F_1$ occurrence. Let us suppose the statement is true for all $j\leq i$. Note that $ax_{2(k-i)-2}b$ appears as suffix of $ax_{2(k-i)}b$ and as suffix of $ax_{2(k-i)-1}b$. Moreover, such two occurrences are distinct because $ax_{2(k-i)-1}b$ is not a suffix of $ax_{2(k-i)}b$. This means that, by using the inductive hypothesis, the number of occurrences of $ax_{2(k-i)-2}b$ is $F_{2i}+F_{2i+1}=F_{2i+2}$. Analogously, $ax_{2(k-i)-3}b$ appears as prefix of  $ax_{2(k-i)-1}b$ and as prefix of $ax_{2(k-i)-2}b$. Moreover, such two occurrences are distinct because $ax_{2(k-i)-2}b$ is not a prefix of $ax_{2(k-i)-1}b$. This means that the number of occurrences of $ax_{2(k-i)-3}b$ is $F_{2i+1}+F_{2i+2}=F_{2i+3}$.
\end{proof}

\begin{proposition}\label{prop:mid}
$\bwt(v^{\rev})_{{\rm mid}} = a^{F_0}ba^{F_2}b\ldots a^{F_{2k-4}}b$. 
\end{proposition}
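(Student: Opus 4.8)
The plan is to give an explicit, block-by-block description of the middle part, i.e.\ of the sorted conjugates of $v^{\rev}$ that start with $a$ and are $\geq x_{2k}bba$. Concretely, I will show that, for $i=0,1,\dots,k-2$, this block contains a maximal run of $a$'s coming from the set $B_i$ of conjugates whose prefix is $x_{2(k-i)}b$, and that each such run is closed off by a single conjugate whose prefix is $x_{2(k-i)-1}bb$. Matching the run lengths with $F_{2i}$ then yields $\bwt(v^{\rev})_{{\rm mid}} = a^{F_0}b\,a^{F_2}b\cdots a^{F_{2k-4}}b$.

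First I would collect the elementary prefix facts implied by Prop.~\ref{prop:fib}, part~2: from the identities $x_{2j}=x_{2j-1}bax_{2j-2}=x_{2j-2}abx_{2j-1}$ and $x_{2j-1}=x_{2j-2}abx_{2j-3}=x_{2j-3}bax_{2j-2}$ one gets that $x_{j}$ is a prefix of $x_{j+1}$ for all $j$, and that an even-indexed palindrome $x_{2j}$, whenever it is a proper prefix of a longer $x$, is followed by $a$. The first consequence is that $x_{2(k-i)}b$ is \emph{not} left-special: by Lemma~\ref{lemma:leftspecial} a left-special factor starting with $a$ must be a prefix of $x_{2k-1}b$, but $x_{2(k-i)}$ is followed by $a$ (or is longer than $x_{2k-1}$ when $i=0$), so $x_{2(k-i)}b$ is not such a prefix. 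Hence all occurrences of $x_{2(k-i)}b$ share the same preceding letter, which is $a$ since $ax_{2(k-i)}b$ occurs (Lemma~\ref{lemma:occsof_x}). Therefore $B_i$ is contiguous and its BWT is a pure run of $a$'s, of length equal to the number of occurrences of $ax_{2(k-i)}b$, namely $F_{2i}$ by Lemma~\ref{lemma:occsof_x} (the factor contains no $bb$, so its count is the same in $v^{\rev}$ as in $s_{2k}$).

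For the separators I would use that $bb$ occurs exactly once in $v^{\rev}=bbax_{2k}$ and that, reading around the circle, it is immediately preceded by the suffix of $x_{2k}$. Since $x_{2(k-i)-1}$ is a palindromic prefix, hence also a suffix, of $x_{2k}$, there is exactly one conjugate beginning with $x_{2(k-i)-1}bb$; and by palindrome symmetry the letter preceding this suffix-occurrence of $x_{2(k-i)-1}$ equals the letter following its prefix-occurrence, which is $b$, so this conjugate is preceded by $b$ and contributes a single $b$. The global ordering then follows from prefix comparisons: $x_{2(k-i)}b$ begins with $x_{2(k-i)-1}ba$ while the separator begins with $x_{2(k-i)-1}bb$, giving block-before-separator; and the separator begins with $x_{2(k-i)-2}a$ while $B_{i+1}$ begins with $x_{2(k-i)-2}b$, giving separator-before-next-block. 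In particular the smallest middle conjugate is $x_{2k}bba$ and the largest is the unique conjugate beginning with $abb=x_3bb$, which is also the largest $a$-starting conjugate, so the middle part is immediately followed by the bottom part $bax_{2k}b$ of Prop.~\ref{prop:bottom}.

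The step I expect to be the real obstacle is \emph{completeness}: proving that the $k-1$ runs and the $k-1$ separators exhaust the middle part with no conjugate squeezed in between. The $a$-part is clean by counting: the number of $a$-preceded conjugates starting with $a$ equals the number of circular occurrences of $aa$, which is $|v^{\rev}|_a-\#ab=F_{2k-1}-F_{2k-2}=F_{2k-3}=\sum_{i=0}^{k-2}F_{2i}$, so the $B_i$ already account for \emph{every} such conjugate. What remains delicate is to exclude any further $b$-preceded conjugate lying above $x_{2k}bba$, equivalently to show that no conjugate slips strictly between $B_i$ and its separator or between a separator and $B_{i+1}$. I would prove this as a no-gap statement: a conjugate trapped between $\max B_i$ and its separator must itself begin with $x_{2(k-i)-1}b$ and then $a$, and one must show every such conjugate already extends to $x_{2(k-i)}b$ and so lies in $B_i$; this reduces to controlling the right-extensions of $x_{2(k-i)-1}ba$ together with the uniqueness of the $bb$-occurrence, and is exactly where the full self-similar structure depicted in Fig.~\ref{fig:stdplus-matrix} has to be used. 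I would keep the argument non-circular by not invoking the (later) shape of the top part, relying only on the $aa$-count above and this direct branching analysis.
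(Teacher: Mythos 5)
Your proposal follows the paper's proof plan almost exactly: the same blocks of conjugates with prefix $x_{2(k-i)}b$ (contiguous because they share a prefix, not left-special by Lemma~\ref{lemma:leftspecial}, hence uniformly preceded, of size $F_{2i}$ by Lemma~\ref{lemma:occsof_x}), the same unique separators with prefix $x_{2(k-i)-1}bb$, and the same prefix-comparison ordering. Your micro-variants are correct and in places tidier than the paper: you deduce the preceding letter $a$ from non-left-specialness plus one $a$-preceded occurrence (the paper instead invokes balancedness to exclude $b$), and you justify the separator's preceding $b$ by palindrome symmetry, which the paper merely asserts. Your $aa$-count, $\#aa = F_{2k-1}-F_{2k-2} = F_{2k-3} = \sum_{i=0}^{k-2}F_{2i}$, is a genuinely nice self-contained way to settle the $a$-side completeness without touching the (later) top part; in the paper this is only implicit and effectively relies on Lemma~\ref{lemma:u-top}.

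The genuine gap is the one you flag and then do not close: showing that no $b$-preceded conjugate other than the $k-1$ separators lies in the region $\geq x_{2k}bba$. This cannot be settled by counting ($F_{2k-2}$ conjugates are $b$-preceded and $a$-starting, and nothing determines a priori how many of them fall above $x_{2k}bba$ versus in the top part), so a structural argument is indispensable, and your proposal only gestures at it via an unproven ``no-gap statement.'' Note that it does close with tools already in the paper, by mirroring the proof of Lemma~\ref{lemma:u-top}: if $w \geq x_{2k}bba$ is $b$-preceded and $u = lcp(w, x_{2k}bba)$, then $ua$ is a prefix of $x_{2k}bba$, whose row is preceded by $a$, so $aua$ is a circular factor, while $w$ itself yields the circular factor $bub$; if both occurrences avoided the unique $bb$, both would be factors of the circular (balanced) word $s_{2k}$, contradicting Prop.~\ref{prop:fib}, part 4. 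Hence the occurrence of $bub$ must use the $bb$, and since $u$ starts with $a$ and mid conjugates start with $a$ (so the leading $b$ cannot be either letter of the $bb$ without forcing $w = ax_{2k}bb$, which lies in the top), the prefix $ub$ of $w$ must end inside the $bb$. Then $u$, or $u$ stripped of its final $b$, is simultaneously a prefix and a suffix of $x_{2k}$; the borders of $x_{2k}$ are exactly the palindromes $x_j$, and the parity bookkeeping (even-indexed borders are both followed and, by palindromy, preceded by $a$, excluding them) leaves only $u = x_{2(k-i)-1}b$, i.e., $w$ is a separator. To be fair, the paper itself compresses this step into the single sentence ``the claim follows from the fact that each $x_{2(k-i)-1}bb$ occurs exactly once and it is preceded by $b$,'' so your instinct about where the real difficulty sits is accurate; but as written, your proposal establishes strictly less than the proposition claims, and this final step needs to be carried out, not just reduced.
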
 
\begin{proof}
For all $2\leq i<j$, $x_i$ is a prefix (and also a suffix) of $x_j$. This means that the rotations starting with $x_ibb$ are lexicographically greater than $x_jbb$. Moreover, for $1\leq i\leq k-2$,  $x_{2(k-i)}b$ is not a prefix of $x_{2k-1}b$. Thus, by Lemma \ref{lemma:leftspecial}, $x_{2(k-i)}b$ is not left-special. 
Therefore, each occurrence of $x_{2(k-i)}b$ is preceded by the same character; this character must be $a$, since otherwise, both $bx_{2(k-i)}b$ and $ax_{2(k-i)}a$ would be factors, contradicting the fact that $s_{2k}^{\rev}$ is balanced (Prop.~\ref{prop:fib}, part 4). 
Therefore, all occurrences of $x_{2(k-i)}b$ correspond to a run of $a$'s in the $BWT$. The length of this run is $F_{2i}$ by Lemma \ref{lemma:occsof_x}. The claim follows from the fact that each $x_{2(k-i)-1}bb$ occurs exactly once and it is preceded by $b$.
\end{proof}


\subsection{Top part}

\begin{lemma}\label{lemma:u-top}
Let $i$ be such that $\conj_i(v^{\rev}) < x_{2k}bba$. Then the last character of $\conj_i(v^{\rev})$ is $b$. 
\end{lemma}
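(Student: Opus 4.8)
The plan is to recast the statement as an extremal property of the single conjugate $x_{2k}bba$. Since $x_{2k}bba = \conj_4(v^{\rev})$ itself ends in $a$, the lemma is equivalent to the sharp contrapositive: \emph{$x_{2k}bba$ is the lexicographically smallest conjugate of $v^{\rev}$ that ends in $a$}. Indeed, if this holds then every conjugate strictly smaller than $x_{2k}bba$ must end in $b$, which is exactly the claim; conversely the claim forces $x_{2k}bba$ to be minimal among the $a$-ending conjugates. So I would fix an arbitrary conjugate $u=\conj_i(v^{\rev})$ ending in $a$ with $u\neq x_{2k}bba$ and aim to show $u > x_{2k}bba$.

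First I would classify the $a$-ending conjugates by their first letter. A conjugate ends in $a$ exactly when it is preceded cyclically by $a$. If $u$ starts with $b$, then $u>x_{2k}bba$ is immediate, since $x_{2k}$ starts with $a$. If $u$ starts with $a$, then $u$ is of the form $a\cdots$ and is preceded by $a$; since $s_{2k}$ is balanced (Prop.~\ref{prop:fib}, part 4) the word $aaa$ is not a circular factor of $v^{\rev}$, so the second letter of $u$ must be $b$. Thus every $a$-ending conjugate beginning with $a$ in fact begins with $ab$ and arises from an occurrence of the factor $aab$ (with $u$ starting at the second $a$); note $x_{2k}bba$ is of exactly this type, coming from the $aab$ at the $bba$-junction. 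The task therefore reduces to comparing $x_{2k}bba$ with each $ab$-starting, $a$-ending conjugate $u$.

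For the comparison I would examine $p=lcp(u,x_{2k}bba)$; since both start with $ab$ we have $|p|\ge 2$, and $u$ and $x_{2k}bba$ differ at position $|p|+1$. I want that position to carry $a$ in $x_{2k}bba$ and $b$ in $u$, giving $u>x_{2k}bba$. Suppose instead the opposite. The divergence cannot fall inside the trailing $bba$: that would force $u$ to begin with $x_{2k}$ (or $x_{2k}b$) followed by $a$, which the occurrence structure of $x_{2k}$ in $v^{\rev}=bbax_{2k}$, together with the absence of $aaa$ and $bbb$ and the uniqueness of $bb$, rules out. Hence $p$ is a proper prefix of $x_{2k}$ with $x_{2k}[|p|+1]=b$, so $pb$ is a prefix of $x_{2k}$, while $pa$ is a circular factor (the prefix of $u$). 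The prefix occurrence of $p$ sits at the start of $x_{2k}$ in $v^{\rev}$, so it is preceded by the $a$ of $bba$ and followed by $b$; thus $apb$ is a circular factor. Moreover $u$ ends in $a$, so its initial $pa$ is preceded by $a$, i.e.\ $apa$ is a circular factor as well. Consequently $ap$ would be a \emph{right-special} circular factor of $v^{\rev}$ of length $|ap|\ge 3$.

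The crux — and the step I expect to be the main obstacle — is to show that no such right-special $ap$ exists: when $p$ is a prefix of $x_{2k}$ of length $\ge 2$ followed by $b$ in $x_{2k}$, the word $ap$ occurs in $v^{\rev}$ always followed by the same letter, so $apa$ cannot be a circular factor. This is genuinely a statement about the \emph{defected} circular word $v^{\rev}$, whose $bb$ makes its factor set differ from that of the infinite Fibonacci word, so the classical right-special theory does not apply directly. I would attack it in one of two ways: either by the reversal duality — $ap$ is right-special in $v^{\rev}$ iff $(ap)^{\rev}$ is left-special in $v$ — reusing the counting/structural technique behind Lemma~\ref{lemma:leftspecial} to enumerate the admissible special factors; or by a direct induction on $k$ driven by the self-similar decomposition $x_{2k}=x_{2k-1}bax_{2k-2}=x_{2k-2}abx_{2k-1}$ (Prop.~\ref{prop:fib}, part 2) and balancedness, tracking exactly which prefixes of $x_{2k}$ admit both an $a$- and a $b$-extension in the finite word. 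A cleaner alternative I would also try is to transfer the known BWT of $s_{2k}$: from $\bwt(s_{2k})=b^{F_{2k-2}}a^{F_{2k-1}}$ the conjugate $x_{2k}ba$ is the smallest conjugate of $s_{2k}$ ending in $a$, and since $v^{\rev}=b\cdot s_{2k}^{\rev}$ is $s_{2k}^{\rev}$ with one extra $b$ inserted, I would argue — in the spirit of the proof of Prop.~\ref{prop:fibplus-r} — that this extremal conjugate is carried to $x_{2k}bba$ and stays minimal among the $a$-ending conjugates. Finally, I would close the boundary sub-cases (the common prefix reaching into $bba$) using only that $bb$ occurs exactly once and that $aaa$ and $bbb$ never occur.
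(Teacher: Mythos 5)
Your setup is sound and tracks the paper's proof up to a point: the paper also compares $\conj_i(v^{\rev})$ with $x_{2k}bba$ via their longest common prefix and also aims at a balancedness contradiction. But there is a genuine gap at exactly the step you flag as ``the crux'': from the assumption you derive only that $apa$ and $apb$ are both circular factors, i.e.\ that $ap$ is \emph{right-special} --- and that is not a contradiction at all, since balancedness permits two equal-length factors whose letter counts differ by one (right-special factors exist in every Sturmian-like word). None of your three proposed escape routes is carried out, and the most natural one (reversal duality) would require a characterization of left-special circular factors of $v$, whereas Lemma~\ref{lemma:leftspecial} characterizes those of $v^{\rev}$; the BWT-transfer idea is particularly speculative given that the whole point of the paper is that the reverse behaves very differently. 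The missing idea is to get the $b$-preceded occurrence instead of the $a$-preceded one: since $pb$ is a prefix of $x_{2k-1}b$, Lemma~\ref{lemma:leftspecial} says $pb$ is left-special, so $bpb$ is a circular factor; together with $apa$ (which you correctly derived from the $a$-ending assumption), reversal gives $bp^{\rev}b$ and $ap^{\rev}a$ as factors of $s_{2k}$, whose letter counts differ by \emph{two} --- contradicting balancedness (Prop.~\ref{prop:fib}, part 4). No analysis of right-special factors is needed.

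A second, smaller defect blocks even this patch as your proof stands: you localize the lcp $p$ only to a proper prefix of $x_{2k}$, but the argument above needs $pb$ to be a prefix of $x_{2k-1}b$, which fails when $|x_{2k-1}| < |p| < |x_{2k}|$. The paper excludes this range first, by observing that $x_{2k-1}$ has exactly two circular occurrences in $v^{\rev}$: one followed by $ba$ (the prefix of $x_{2k}bba = x_{2k-1}bax_{2k-2}bba$) and one followed by $bb$, whose conjugate is lexicographically \emph{greater} than $x_{2k}bba$; hence any conjugate below $x_{2k}bba$ has lcp a proper prefix of $x_{2k-1}$. Your boundary analysis (``divergence cannot fall inside the trailing $bba$'') is the coarser statement $|p|<|x_{2k}|$ and does not deliver this sharper localization. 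So the proposal, as written, reduces the lemma to an unproved claim about right-special factors rather than proving it; with the two corrections above (localize $p$ inside $x_{2k-1}$ via the two occurrences of $x_{2k-1}$, then use left-specialness of $pb$ plus balancedness) it collapses to essentially the paper's argument.
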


\begin{proof} 
Let $u = lcp(\conj_i(v^{\rev}), x_{2k}bba)$. Then $u$ is a proper prefix of $x_{2k-1}$. This is because there are only two occurrences of $x_{2k-1}$, one followed by $ba$, this is the prefix of $x_{2k}bba$, and the other followed by $bb$, thus greater than $x_{2k}bba$. Therefore, $u'=ua$ is a prefix of $\conj_i(v^{\rev})$ but not of $x_{2k-1}$, and thus by Lemma~\ref{lemma:leftspecial} it is not left-special. Now assume that $\conj_i(v^{\rev})$ ends with $a$. Then $aua$ is a factor of $v^{\rev}$, and since $u$ does not contain $bb$, it is thus also a factor of $s_{2k}^{\rev}$. On the other hand, $ub$ is left-special, since it is a prefix of $x_{2k-1}b$ (Lemma~\ref{lemma:leftspecial}), therefore both $bub$ and $aua$ are factors of $v^{\rev}$, and again, of $s_{2k}^{\rev}$. This implies that both $au^{\rev}a$ and $bu^{\rev}b$ are factors of $s_{2k}$. This is a contradiction, since $s_{2k}$ is balanced (Prop.~\ref{prop:fib}, part 4). 
\end{proof}

\begin{proposition}\label{prop:top} 
$\bwt(v^{\rev})_{{\rm top}} = b^{F_{2k-2}-k+1}$. 
\end{proposition}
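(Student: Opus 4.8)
The plan is to combine Lemma~\ref{lemma:u-top}, which controls the \emph{content} of the top part, with a short counting argument, which controls its \emph{length}. First I would observe that, by the way the matrix was partitioned, the top part consists exactly of those conjugates $\conj_i(v^{\rev})$ satisfying $\conj_i(v^{\rev}) < x_{2k}bba$: indeed $x_{2k}bba$ is the first row of the middle part, so everything strictly above it is the top part. Every such conjugate therefore meets the hypothesis of Lemma~\ref{lemma:u-top}, and hence ends in the letter $b$. Consequently $\bwt(v^{\rev})_{{\rm top}}$ is a single run $b^{t}$, and the only remaining task is to determine its length $t$, i.e.\ the number of conjugates lying in the top part.

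To count these, I would argue that the top part together with the middle part is exactly the set of conjugates of $v^{\rev}$ beginning with the letter $a$. Since $x_{2k}bba$ begins with $a$ (recall $x_{2k}$ starts with $a$) while, by Proposition~\ref{prop:bottom}, the first row $bax_{2k}b$ of the bottom part is the smallest conjugate beginning with $b$, every conjugate below $bax_{2k}b$ begins with $a$, and conversely every conjugate beginning with $a$ precedes $bax_{2k}b$. Thus $|{\rm top}| + |{\rm mid}|$ equals the number of occurrences of $a$ in $v^{\rev}$, which is $|v^{\rev}|_a = |v|_a = |s_{2k}|_a = F_{2k-1}$.

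It then remains to subtract the length of the middle part, which is read off directly from Proposition~\ref{prop:mid}. The word $a^{F_0}ba^{F_2}b\cdots a^{F_{2k-4}}b$ contains $k-1$ occurrences of $b$ and $\sum_{i=0}^{k-2}F_{2i}$ occurrences of $a$; a one-line induction on $k$ (or a telescoping of $F_{2i}=F_{2i+1}-F_{2i-1}$) gives the identity $\sum_{i=0}^{k-2}F_{2i}=F_{2k-3}$, so $|{\rm mid}| = F_{2k-3}+k-1$. Hence
\[ t = |{\rm top}| = F_{2k-1} - |{\rm mid}| = F_{2k-1} - F_{2k-3} - (k-1) = F_{2k-2} - k + 1, \]
where the last equality uses $F_{2k-1}=F_{2k-2}+F_{2k-3}$. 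This yields $\bwt(v^{\rev})_{{\rm top}} = b^{F_{2k-2}-k+1}$, as claimed.

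Since the genuinely delicate, balancedness-driven step (that no top conjugate can end in $a$) is already encapsulated in Lemma~\ref{lemma:u-top}, I expect no real obstacle: the proof reduces to bookkeeping. The only points requiring a little care are the partition count---making sure the top and middle parts together account for precisely the $a$-initial conjugates and nothing more---and the elementary Fibonacci summation identity $\sum_{i=0}^{k-2}F_{2i}=F_{2k-3}$, which I would verify by a short induction.
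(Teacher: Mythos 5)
Your proof is correct, and it shares with the paper the essential ingredient: Lemma~\ref{lemma:u-top} guarantees that every conjugate strictly below $x_{2k}bba$ in lexicographic order ends in $b$, so the top part is a single run of $b$'s. Where you diverge is in how the \emph{length} of that run is computed. The paper counts characters rather than rows: since the BWT is a permutation of $v^{\rev}$, the total number of $b$'s is $F_{2k-2}+1$, of which exactly $1$ sits in the bottom part (Prop.~\ref{prop:bottom}) and $k-1$ in the middle part (Prop.~\ref{prop:mid}), leaving $F_{2k-2}-k+1$ for the top---a one-line computation needing no Fibonacci identities. You instead count rows: top and middle together comprise precisely the $a$-initial conjugates, of which there are $|v^{\rev}|_a = F_{2k-1}$, and then you subtract $|{\rm mid}| = F_{2k-3}+k-1$, computed via the summation identity $\sum_{i=0}^{k-2}F_{2i}=F_{2k-3}$ and $F_{2k-1}=F_{2k-2}+F_{2k-3}$. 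Both arguments are sound; yours is slightly longer because it needs that identity and the observation that the bottom part collects exactly the $b$-initial conjugates (which, as a small citation point, is established inside the \emph{proof} of Prop.~\ref{prop:bottom} rather than in its statement, though it also follows directly from $ax_{2k}b$ being Lyndon). A compensating benefit of your route is that it verifies the three parts tile the whole matrix---indeed your count together with the paper's gives $(F_{2k-2}-k+1)+(F_{2k-3}+k-1)+(F_{2k-2}+1)=F_{2k}+1=|v|$, a useful consistency check the paper's character count does not provide.
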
 

\begin{proof} 
By Lemma~\ref{lemma:u-top}, $\bwt(v^{\rev})_{{\rm top}}$ consists of $b$'s only. The number of $b$'s of $v$ is $F_{2k-2}+1$, of which we have accounted for $k$ (since $1$ is contained in $\bwt(v^{\rev})_{{\rm bot}}$ and $k-1$ in $\bwt(v^{\rev})_{{\rm mid}}$), there remaining exactly ${F_{2k-2}-k+1}$ $b$'s. 
\end{proof}


\subsection{Putting it all together}

\begin{proof} {\em of Prop.~\ref{prop:fibplus-r}}: The claim for even-order Fibonacci-plus words follows from Propositions~\ref{prop:bottom},~\ref{prop:mid}, and~\ref{prop:top}. The claim for odd-order Fibonacci-plus words can be proved in an analogous manner. 
\end{proof}

\begin{proof} {\em of Thm.~\ref{thm:fibplus}}: From Propositions~\ref{prop:fibplus-r} and~\ref{prop:fibplusrev}, we have that $\rho(v) = 2k/4 = k/2$.  On the other hand, $n = |v| = F_{2k}+1$, thus by the properties of the Fibonacci numbers, $2k = \Theta(\log n)$, implying that $\rho(v) = k/2 = \Theta(\log n)$. \end{proof}


\section{Standard-plus words have $\rho = \Oh(\log n)$} 
In this section we consider other infinite families of finite words, defined from standard words. Here we assume that $d_0\geq 1$, otherwise we could consider the word obtained by exchanging $a$'s and $b$'s and the results still hold true.

\begin{definition}
A word $v$ is called {\em standard-plus} if it is either of the form $sb$, where $s$ is a standard word of even order $2k$, $k\geq 2$,  or of the form $sa$, where $s$ is a standard word of odd order $2k+1$,  $k\geq 2$. In the first case, $v$ is {\em of even order}, otherwise {\em of odd order}.  
\end{definition}

We show that, when a standard-plus word $v=s_{2k}b$ is considered, the exact asymptotic growth of $\rho$ depends on the directive sequence of the word $s_{2k}$. Here we give the proof of the result for standard-plus words of even order, however an analogous statement can also be proved for standard-plus words of odd order.

\begin{proposition}\label{prop:stdplus-r} Let $v=s_{2k}b$ be a standard-plus word of even order. Then $r(v) = 4$.
\end{proposition}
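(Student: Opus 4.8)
The plan is to reprise the proof of Proposition~\ref{prop:fibplus-r} almost line for line, replacing the Fibonacci-specific identities of Proposition~\ref{prop:fib} by properties that hold for every standard word. Write $v = sb$ with $s = s_{2k}$ and $n = |v| = |s|+1$, and put $q = |s|_b$, $p = |s|_a$. Three facts drive the argument. First, because $s$ is standard, $\bwt(s) = b^q a^p$ with $\gcd(q,p)=1$ \cite{MantaciRS03}. Second, a standard word of even order can be written $s = xab$ with $x$ a palindrome, and the two conjugates $xab$ and $xba$ are exactly the conjugates at BWT-ranks $q$ and $q+1$, i.e.\ the last $b$ and first $a$ of $\bwt(s)$ \cite{MantaciRS03,BorelR06,deLuca97a}. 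Third, $s$ contains no factor $bb$: since $k\ge 2$ gives $p>q$, the factor $aa$ occurs, so $bb$ cannot occur without violating balancedness (Prop.~\ref{prop:fib}, part~4, which holds for all standard words). The target is $\bwt(v) = b^q a^{p-1} ba$, which has exactly four runs because $q\ge 1$ and $p\ge 2$ for $k\ge 2$.

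Granting these facts, the outer part of the computation is identical to the Fibonacci case. From the first two facts I would derive the analogue of Eq.~\eqref{eq:lasta}: every conjugate of $s$ that starts with $a$ and ends with $b$ is strictly smaller than $xab$, since the conjugates ending in $b$ are precisely those of rank $\le q$ and $xab$ is the largest of them. The no-$bb$ property then makes the conjugate starting with $bb$ the unique largest conjugate of $v$, and the analogue of Eq.~\eqref{eq:lasta} shows, exactly as before, that the penultimate conjugate of $v$ is $bxab$. These two rows account for the trailing $ba$ of $\bwt(v)$.

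The heart of the matter is to show that appending the final $b$ preserves the relative order of any two lexicographically consecutive conjugates of $s$; this forces the first $n-2$ entries of the BW-array of $v$ to agree with those of $s$, contributing $b^q a^{p-1}$. If $\conj_i(s)<\conj_j(s)$ with $i<j$, the appended $b$ falls earlier in $\conj_j$ and the inequality is trivial. For $i>j$ I would invoke the description of adjacent conjugates of a standard word \cite{BorelR06}: two consecutive conjugates have the form $uabu'$ and $ubau'$ with $u'u = x$, and the palindromic prefix $x$ occurs exactly twice as a circular factor of $s$. Hence the inserted $b$ lands immediately after $uab$ in $\conj_i(v)$ and strictly inside the common tail $u'$ in $\conj_j(v)$, so $u = lcp(\conj_i(v),\conj_j(v))$ and the two conjugates still differ first at the $a$ versus $b$ that follows $u$; order is preserved.

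The main obstacle is exactly this last description: in Proposition~\ref{prop:fibplus-r} it was read off from the explicit recursions $x_{2k}=x_{2k-1}bax_{2k-2}=x_{2k-2}abx_{2k-1}$ of Proposition~\ref{prop:fib}, which are no longer available. I would therefore replace that step by the general results on the palindromic decomposition and the conjugate order of standard words (de Luca; Borel--Reutenauer), in particular re-proving that the palindromic prefix $x$ of a standard word of even order has exactly two circular occurrences. Once order preservation is in hand, concatenating the unchanged block $b^q a^{p-1}$ with the trailing $ba$ yields $\bwt(v) = b^q a^{p-1} ba$ and hence $r(v)=4$; the odd-order standard-plus words are handled by the symmetric argument.
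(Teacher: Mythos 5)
Your proposal is correct and follows essentially the same route as the paper, which proves this proposition simply by declaring it ``analogous to that of Proposition~\ref{prop:fibplus-r}''; your write-up is precisely the intended instantiation of that analogy, with the right generalizations (the palindromic decomposition $s_{2k}=Cab$, the rank-$q$/rank-$(q{+}1)$ positions of $Cab$ and $Cba$, balancedness excluding $bb$ when $d_0\geq 1$, and the Borel--Reutenauer form $uabu'$/$ubau'$ of consecutive conjugates). You also correctly identify the one step that genuinely needs re-proving outside the Fibonacci recursions---that the palindromic prefix $C$ has exactly two circular occurrences---which indeed follows from the cited structure of left-special circular factors of standard words, so no gap remains.
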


The proof of Proposition \ref{prop:stdplus-r} is analogous to that of Proposition \ref{prop:fibplus-r}.

\begin{proposition}\label{prop:rho_stdplus_rev}
Let $v=s_{2k}b$ be a standard-plus word of even order $2k$, where $s_{2k}$ is the standard word obtained by using the directive sequence $(d_0,d_1,\ldots,d_{2k-2})$ of length $2k-1$, where $d_0\geq 1$. If $d_0=1$, then $r(v^{\rev})=2k$. Otherwise,  $r(v^{\rev})=2k+2$.

\end{proposition}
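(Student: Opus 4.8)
The plan is to mirror the three-part decomposition of the BWT-matrix of $v^{\rev}$ used in the proof of Proposition~\ref{prop:fibplusrev-r-even}, and to isolate the single place where the directive sequence enters. Every standard word of even order again satisfies $s_{2k}=x_{2k}ab$ with $x_{2k}$ a palindrome, so $v^{\rev}=bbax_{2k}$ and the four conjugates obey $ax_{2k}bb<x_{2k}bba<bax_{2k}b<bbax_{2k}$, yielding the top, middle and bottom parts exactly as before. I would first observe that the bottom and top parts carry over unchanged. The bottom-part argument (Proposition~\ref{prop:bottom}) uses only that $ax_{2k}b$ is Lyndon and that $s_{2k}$ has no factor $bb$ (the latter from balancedness, valid for every standard word with $d_0\geq 1$), giving $\bwt(v^{\rev})_{{\rm bot}}=ba^{|s_{2k}|_b}$: two runs, beginning with $b$. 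The top-part argument (Lemma~\ref{lemma:u-top} and Proposition~\ref{prop:top}) uses only balancedness of $s_{2k}$, so $\bwt(v^{\rev})_{{\rm top}}$ is a single run of $b$'s. Hence all dependence on $(d_0,\ldots,d_{2k-2})$ is confined to the middle part, and it suffices to determine its number of runs together with its last character, which decides whether it merges with the bottom part.

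For the middle part I would reprove the two supporting lemmas in the standard-word setting. Lemma~\ref{lemma:leftspecial} generalizes verbatim, as it rests on the general fact that $bax_{2k}$ has, for each length, a unique left-special circular factor, itself a prefix of $x_{2k}$: the left-special factors of $v^{\rev}$ are exactly the prefixes of $x_{2k-1}b$ and of $bax_{2k-2}$. The occurrence-count lemma (Lemma~\ref{lemma:occsof_x}) also survives, but the Fibonacci numbers are replaced by continuant-type counts determined by the directive sequence; the combinatorial engine of the induction---that $ax_{j-2}b$ occurs as a suffix of both $ax_jb$ and $ax_{j-1}b$, these two occurrences being distinct---is unaffected. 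As in Proposition~\ref{prop:mid}, for each level $i$ the factor $x_{2(k-i)}b$ is not left-special, hence preceded uniformly by $a$ (otherwise balancedness of $s_{2k}^{\rev}$ fails), contributing one run of $a$'s, while $x_{2(k-i)-1}bb$ occurs once preceded by $b$, contributing one separating $b$. Summing over the levels $i=0,\ldots,k-2$, which are present for every directive sequence, yields $k-1$ runs of $a$'s and $k-1$ single $b$'s, i.e.\ $2(k-1)$ runs.

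The decisive step, and the main obstacle, is the behaviour at the bottom of the recursion, where the count bifurcates on $d_0$. Since $s_2=s_1^{d_0}s_0=a^{d_0}b$, we have $x_2=a^{d_0-1}$, so $x_2=\epsilon$ precisely when $d_0=1$ and $x_2\neq\epsilon$ precisely when $d_0\geq 2$, whereas all higher prefixes $x_4,\ldots,x_{2k}$ are non-empty in either case. This is exactly the level $i=k-1$, which the Fibonacci proof excluded. When $d_0=1$ this level is degenerate ($x_2b=b$) and adds nothing: the middle part has $2(k-1)$ runs and ends with the $b$ coming from $x_3bb$; since the bottom part begins with $b$ this $b$ merges with it, and a direct count gives $1+2(k-1)+2-1=2k$, recovering Proposition~\ref{prop:fibplusrev-r-even}. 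When $d_0\geq 2$ the factor $x_2b=a^{d_0-1}b$ is a genuine circular factor that is \emph{not} a prefix of $x_{2k-1}b$ (their $d_0$-th letters differ) and not a prefix of $bax_{2k-2}$, hence by Lemma~\ref{lemma:leftspecial} not left-special, so it is preceded uniformly by $a$ and produces one further run of $a$'s, \emph{appended at the end} of the middle part. Here the care lies in verifying that this extra level creates no new separating $b$: the would-be separator is $x_1bb=bb$, whose unique occurrence in $v^{\rev}=bbax_{2k}$ lies in the bottom part (the largest conjugate $bbax_{2k}$), not in the middle. Thus for $d_0\geq 2$ the middle part has $2k-1$ runs and ends in $a$; no merge with the bottom occurs, and the total is $1+(2k-1)+2=2k+2$. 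In particular the number of runs depends only on whether $d_0=1$ or $d_0\geq 2$, the remaining $d_i$ affecting only the run lengths, as claimed.
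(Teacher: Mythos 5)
Your skeleton --- the top/middle/bottom decomposition, the observation that all dependence on the directive sequence is confined to the middle part, and the $d_0$-bifurcation via $x_2=a^{d_0-1}$ (an extra terminal $a$-run for $d_0\geq 2$, a merge of the middle's final $b$ with the bottom's leading $b$ for $d_0=1$) --- matches the truth, and your final counts $2k$ and $2k+2$ are correct. But the middle-part analysis you transplant from Proposition~\ref{prop:mid} is false as stated for general directive sequences. You assert that the separator between consecutive $a$-runs is a \emph{single} $b$, namely the unique rotation with prefix $x_{2(k-i)-1}bb$. This fails whenever some odd-indexed $d_j>1$: in the paper's own example, with directive sequence $(2,3,1,2,1)$, one has $\bwt(v^{\rev})=b^{10}ab^2a^3b^3a^{15}ba^{15}$, so the middle separators are $b^2$ and $b^3$ (of lengths $d_3$ and $d_1$), not single letters. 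The reason is that a power $s_{2j}^{d_{2j-1}}$ with $d_{2j-1}>1$ inside $s_{2j+1}=s_{2j}^{d_{2j-1}}s_{2j-1}$ generates, besides the one rotation beginning with $x_{2j+1}bb$, further $b$-preceded rotations (in the example, the rotation beginning $x_4abx_3bb\cdots$, which sorts between the rotation beginning $x_5bb$ and the $x_4b$-block); your partition of the middle part into ``$a$-preceded rotations with prefix $x_{2(k-i)}b$'' plus ``one $b$-preceded rotation with prefix $x_{2(k-i)-1}bb$ per level'' simply does not contain them --- in the example your blocks account for $21$ rotations against a true middle part of $24$. The run \emph{count} is insensitive to the separator lengths, but it is entirely sensitive to \emph{where} the unaccounted $b$-preceded rotations sort: if even one of them fell strictly inside an $a$-block, it would split that run and the totals $2k$ and $2k+2$ would be wrong. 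Proving that they cluster contiguously with the $x_{2(k-i)-1}bb$ rotation is exactly the missing step, and it is what the paper's sketch supplies by a different route: writing $C\in\{a^{d_0}b,a^{d_0+1}b\}^*$ for the palindrome with $s_{2k}=Cab$, and showing that each odd-indexed $d_i$ produces a set of \emph{consecutive} rotations preceded by $b$ (with prefixes of the form $(a^{d_0}b)^{d_1}a^{d_0+1}b\cdots$), which reads the $b$-run lengths $d_1,d_3,\ldots,d_{2k-3}$, the final singleton run, and the residual top run directly off the directive sequence.

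A second, related gap: your claim that Lemma~\ref{lemma:leftspecial} (and with it Lemma~\ref{lemma:u-top}) ``generalizes verbatim'' is unjustified when $d_{2k-2}>1$. In that case $x_{2k}=(x_{2k-1}ba)^{d_{2k-2}}x_{2k-2}$, so $x_{2k-1}ba$ occurs $d_{2k-2}$ times and $x_{2k-1}$ has $d_{2k-2}+1$ circular occurrences, whereas both proofs lean on uniqueness claims (``$x_{2k-1}ba$ occurs exactly once'', ``there are only two occurrences of $x_{2k-1}$''); the statements may well survive, but the proofs must be redone with the correct multiplicities, and in particular the identification in Lemma~\ref{lemma:u-top} of $lcp(\conj_i(v^{\rev}),x_{2k}bba)$ as a proper prefix of $x_{2k-1}$ is no longer immediate. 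In short: your answer and the $d_0$-dichotomy are right, and the top and bottom parts do carry over, but the middle part cannot be obtained by renaming Fibonacci numbers into continuant-type counts --- you must either prove the clustering of the extra $b$-preceded rotations created by the exponents $d_1,d_3,\ldots,d_{2k-3}$, or follow the paper's sketch and derive the $b$-runs from the factorization $C\in\{a^{d_0}b,a^{d_0+1}b\}^*$.
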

\begin{proof}
(Sketch) Similar to what happens with Fibonacci's words (see Prop. \ref{prop:fib}), it is known that $s_{2k}=Cab$, where $C$ is a palindrome, the conjugate $aCb$ is a Lyndon word (see \cite{deLucaMignosi1994,BersteldeLuca97}). Then  $v^{\rev}=bbaC$ and, in order to lexicographically sort the conjugates of $v^{\rev}$, we can consider its Lyndon rotation $aCbb$. 
One can verify that $C\in\{a^{d_0}b,a^{d_0+1}b\}^*$.
It is possible to see that $\bwt(v^{\rev})$ ends with $ba^{|s_{2k}|_b}$, since $baCb$ is the smallest rotation starting with $b$. Moreover, since $t=b(a^{d_0}b)^{d_1}b$ is a suffix of $aCbb$, all rotations of $v^{\rev}$ starting with the first occurrence of $a$ in each run $a^{d_0}$ in $t$  determine $d_1$ consecutive $b$'s in $\bwt(v^{\rev})$. If $d_0=1$ such rotations are followed by the rotation $baCb$, otherwise several rotations preceded by $a$ (including the rotations starting with the other $a$'s of $t$) are in between. So, if $d_0=1$, the last run of $b$'s has length $d_1+1$, otherwise the last two runs of $b$'s have length $d_1$ and $1$, respectively.

Finally, when $d_i$ (with odd $i$) is used to generate standard words, a set of consecutive rotations starting with $(a^{d_0}b)^{d_1}a^{d_0+1}b$ and preceded by $b$ is produced. This means that the other runs of $b$'s have length $d_3,d_5,\ldots,d_{2k-3},|s_{2k}|_b-(d_1+d_3+\ldots+d_{2k-3})$.
\end{proof}

\begin{example}
Let us consider the standard-plus word $v$ of even order constructed by using the directive sequence $(2,3,1,2,1)$. 
One can verify that $$v=aabaabaabaaabaabaabaabaaabaabaabaabaaabaabaabaaabb.$$
Moreover, $\bwt(v^{\rev})=b^{10}ab^2a^3b^3a^{15}ba^{15}$ and $\bwt(v)=b^{15}a^{33}ba.$
\end{example}

\begin{theorem}
Let $v$ be a standard-plus word of even order $n$. Then $\rho(v)=\Oh(\log n)$. 
\end{theorem}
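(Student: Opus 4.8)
The plan is to combine the two preceding propositions with an argument about the length of $v$, exactly mirroring the proof of Theorem~\ref{thm:fibplus} for Fibonacci-plus words. First I would invoke Proposition~\ref{prop:stdplus-r} to fix the numerator (or denominator) of the runs-ratio: for any standard-plus word $v=s_{2k}b$ of even order we have $r(v)=4$, a constant independent of the directive sequence. The entire asymptotic behaviour of $\rho(v)$ is therefore governed by $r(v^{\rev})$, which Proposition~\ref{prop:rho_stdplus_rev} pins down as either $2k$ or $2k+2$ depending on whether $d_0=1$; in both cases $r(v^{\rev})=\Theta(k)$. Hence $\rho(v)=\max\!\big(r(v)/r(v^{\rev}),\,r(v^{\rev})/r(v)\big)=\Theta(k)$, since for $k\geq 2$ the ratio $r(v^{\rev})/r(v)\geq 1$ dominates.

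It then remains to relate the order $2k$ to the length $n=|v|$. Here is where the situation genuinely differs from the Fibonacci case: for a general standard word the length $|s_{2k}|$ is no longer $F_{2k}$ but depends on the directive sequence $(d_0,\ldots,d_{2k-2})$ through the recurrence $|s_{i+1}|=d_{i-1}|s_i|+|s_{i-1}|$. The key observation I would use is that, because each $d_i\geq 1$, the lengths still grow \emph{at least} as fast as the Fibonacci numbers, so $|s_{2k}|\geq F_{2k}$ and therefore $n=|s_{2k}|+1\geq F_{2k}+1$. By the standard estimate $F_{2k}=\Theta(\varphi^{2k})$ this yields $2k=\Oh(\log n)$, i.e.\ $k=\Oh(\log n)$. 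Combining this with $\rho(v)=\Theta(k)$ gives the desired upper bound $\rho(v)=\Oh(\log n)$.

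The main subtlety — and the reason the theorem states only the upper bound $\Oh(\log n)$ rather than $\Theta(\log n)$ — is that the growth of $|s_{2k}|$ in $k$ is only bounded \emph{below} by the Fibonacci rate; the directive sequence can make the word much longer for the same order. When some $d_i$ are large, $n$ can be exponentially bigger than $F_{2k}$, so that $k$ (and hence $\rho(v)=\Theta(k)$) becomes much smaller than $\log n$. This is precisely why the exact asymptotic depends on the directive sequence, as noted before Proposition~\ref{prop:stdplus-r}: the worst case for $\rho$ relative to $n$ is the slowest-growing directive sequence, namely the all-ones Fibonacci sequence, recovering the $\Theta(\log n)$ of Theorem~\ref{thm:fibplus}. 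For the general upper bound, however, the only fact I need is the one-sided inequality $n\geq F_{2k}+1$, which follows immediately from $d_i\geq 1$ and induction on the length recurrence; the direction of the inequality is exactly the one needed so that a larger $n$ can only \emph{decrease} $k$ relative to $\log n$, never increase it.

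\begin{proof}
By Proposition~\ref{prop:stdplus-r}, $r(v)=4$, and by Proposition~\ref{prop:rho_stdplus_rev}, $r(v^{\rev})\in\{2k,2k+2\}$, so that $r(v^{\rev})=\Theta(k)$. Since $k\geq 2$, we have $r(v^{\rev})\geq r(v)$, and therefore
\[
\rho(v) = \frac{r(v^{\rev})}{r(v)} = \Theta(k).
\]
It remains to bound $k$ in terms of $n=|v|$. The lengths of the standard words satisfy $|s_0|=|s_1|=1$ and $|s_{i+1}| = d_{i-1}|s_i| + |s_{i-1}|$ for $i\geq 1$. Since $d_i\geq 1$ for all $i$, a straightforward induction gives $|s_i|\geq F_i$ for all $i\geq 0$. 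In particular $n = |s_{2k}| + 1 \geq F_{2k} + 1$. Using $F_{2k} = \Theta(\varphi^{2k})$ with $\varphi$ the golden ratio, we obtain $2k = \Oh(\log n)$, hence $k = \Oh(\log n)$. Combining this with $\rho(v)=\Theta(k)$ yields $\rho(v) = \Oh(\log n)$.
\end{proof}
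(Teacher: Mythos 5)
Your proof is correct and follows essentially the same route as the paper's: the paper likewise combines Propositions~\ref{prop:stdplus-r} and~\ref{prop:rho_stdplus_rev} to get $\rho(v)\leq\frac{k+1}{2}$ and then uses the bound $|s_{2k}|\geq F_{2k}$ (your induction from $d_i\geq 1$) to conclude $k=\Oh(\log n)$. Your version is merely more detailed, spelling out the induction and the direction of the length inequality that the paper states without proof.
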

\begin{proof}
    By definition, $v=s_{2k}b$ where $s_{2k}$ is a standard word of order $n=2k$ for some positive $k$. Since $|s_{2k}|\geq F_{2k}$, by Prop. \ref{prop:stdplus-r} and \ref{prop:rho_stdplus_rev}, $\rho(v)\leq \frac{k+1}{2}\in \Oh(\log n)$.
\end{proof}

The following proposition states that among all standard-plus words, Fibonacci-plus words are maximal w.r.t.\ $\rho$.
\begin{proposition}\label{prop:fibP_highestRho}
Let $v$ be a Fibonacci-plus word, and $v'$ a standard-plus word s.t.\ $|v| = |v'|$. Then $\rho(v) \geq \rho(v')$. 	
\end{proposition}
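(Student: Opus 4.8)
The plan is to prove Proposition~\ref{prop:fibP_highestRho} by reducing it to a comparison of the quantity $r(v^{\rev})$ across all standard-plus words of a fixed length, since $r(v) = 4$ holds uniformly for every standard-plus word of even order (Prop.~\ref{prop:stdplus-r}) and analogously for odd order. Because $\rho(v) = r(v^{\rev})/4$ whenever $r(v^{\rev}) \geq 4$, maximizing $\rho$ over all standard-plus words of a given length is equivalent to maximizing $r(v^{\rev})$. So the whole statement collapses to: among all standard-plus words $v'$ with $|v'| = |v|$, the Fibonacci-plus word attains the largest value of $r(v^{\rev})$.

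First I would invoke Prop.~\ref{prop:rho_stdplus_rev}, which tells us that for a standard-plus word $v' = s_{2k}b$ built from the directive sequence $(d_0,\ldots,d_{2k-2})$, we have $r((v')^{\rev}) = 2k$ if $d_0 = 1$ and $r((v')^{\rev}) = 2k+2$ otherwise; the analogous count holds in the odd-order case. The key observation is that $r((v')^{\rev})$ grows with the \emph{order} $2k$ (i.e.\ with the number of terms in the directive sequence), and only mildly with whether $d_0 = 1$. Fibonacci-plus words correspond to the directive sequence $(1,1,\ldots,1)$, which, for a fixed length $n$, produces the standard word of \emph{maximal order}, because $|s_{2k}| = \sum_i d_i \cdot(\text{growth})$ is minimized term-by-term exactly when every $d_i = 1$. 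Concretely, the recurrence $s_{i+1} = s_i^{d_{i-1}} s_{i-1}$ shows that larger $d_i$ inflate the length faster, so to reach a prescribed length $n = F_{2k}+1$ with the all-ones sequence requires the greatest number of recursion steps, hence the highest order.

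The main steps are therefore: (1) express $\rho(v') = r((v')^{\rev})/4$ and reduce to maximizing the order; (2) show that for any standard word of length at most $n$, its order is at most the order $2k$ of the Fibonacci word of length $F_{2k}+1 = n$, using the minimality of Fibonacci lengths among standard words of each order; and (3) reconcile the small additive discrepancy in Prop.~\ref{prop:rho_stdplus_rev} (the $+2$ when $d_0 \geq 2$) by checking that a standard-plus word with $d_0 \geq 2$ but of necessarily smaller order $2k' < 2k$ still satisfies $2k' + 2 \leq 2k$, i.e.\ $k' \leq k-1$, which follows because increasing any $d_i$ beyond $1$ strictly shortens the maximal attainable order for a given length by at least one step. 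Here I would use the Fibonacci-length lower bound $|s_j| \geq F_j$ cited in the proof of the preceding theorem.

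The hard part will be step (3): making precise the trade-off that a non-Fibonacci directive sequence, while possibly gaining the additive $+2$ from $d_0 \geq 2$, must pay for it by a reduction in order large enough to more than compensate. The cleanest route is to argue that among standard words of length $\leq n$, the Fibonacci word has order $2k$ and every other directive sequence of equal or greater length-budget has order $\leq 2k$, with the strict drop to $\leq 2k-2$ precisely when $d_0 \geq 2$ forces an extra-long initial block. One must handle carefully the boundary parities (even versus odd order of $s$) and confirm the inequality $2(k-1)+2 = 2k$ yields equality rather than strict improvement, so that Fibonacci-plus words are indeed maximal (and not strictly dominated). I would close by noting that this matches the experimental remark in the introduction that Fibonacci extensions are maximal within the standard-plus class though not among all words of length $n$.
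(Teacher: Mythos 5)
Your proposal is correct and takes essentially the same route as the paper, whose one-line proof simply combines Prop.~\ref{prop:stdplus-r} and Prop.~\ref{prop:rho_stdplus_rev} with the fact that Fibonacci words have the longest directive sequence among all standard words of the same length. Your step (3) --- checking that the $+2$ bonus for $d_0 \geq 2$ is always paid for by an order drop of at least two (any $d_i > 1$ forces $|s_j| > F_j$, so a same-length standard word has strictly smaller, hence by parity at least two smaller, even order, giving $2k' + 2 \leq 2k$) --- is exactly the detail the paper leaves implicit in ``follows directly.''
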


\begin{proof}
	Follows directly from Prop.~\ref{prop:stdplus-r} and~\ref{prop:rho_stdplus_rev}, and from the fact that  Fibonacci words have the longest directive sequence among all standard words of the same length.
\end{proof}


\section{Conclusion and Outlook} 

In this paper, we presented the first non-trivial lower bound on the maximum runs-ratio $\rho(n)$ of a word of length $n$. This shows for the first time that the widely used parameter $r$, the number of runs of the BWT of a word, is not an ideal measure of the repetitiveness of the word. Moreover, it proves that for BWT-based compression a parallel result holds to the ``one-bit catastrophe'' recently shown for LZ78-compression~\cite{Lagarde18}. 

\medskip

Several open questions remain. We saw in the previous section that Fibonacci-plus words are maximal among the class of standard-plus words with respect to the runs-ratio $\rho$. However, they stay strictly below $\rho(n)$, the maximum among all words of length $n$,  even for lengths up to $n=30$. In Table~\ref{tab:rhon2}, we report the values of $\rho(n)$ and compare them to the maximum reached by standard-plus words. Note that this is a Fibonacci-plus word only for $n=9,14,22$. 

	\begin{table}\centering
		\begin{tabular}{c|*{24}{c}}
			$n$ & 
			9 & 10 & 11 & 12 & 13 & 14 & 15 & 16 & 17 & 18 & 19 & 20 & 21 & 22 & 23 & 24 & 25 & 26 & 27 & 28 & 29 & 30\\
			\hline
			$\rho(n)$ & 
			2 & 2 & 2 & 2 & 2 & 2 & 2 &  2 & 2.5 & 2.5 & 2.5 & 2.5 & 3 & 2.5 & 3 & 3 & 2.67 & 3 & 3 & 3 & 3 & 3\\
			
			std-plus & 
			1 & 1.5 & 1.5 & 1.5 & 1 & 1.5 & 1.5 & 1.5 & 1.5 & 1.5 & 2 & 1.5 & 1.5 & 1.5 & 1.5 & 2 & 1.5 & 2 & 2 & 2 & 2 & 2 \\
		\end{tabular}
		\caption{The values of $\rho(n)$ for $n=9,\ldots, 30$, and the maximum value of $\rho(n)$ among all standard-plus words of length $n$. \label{tab:rhon2}}
	\end{table}

It is possible to construct binary words of arbitrary length and greater runs-ratio $\rho$ than any standard-plus word of the same length. However, we currently do not know the asymptotic growth of the $\rho$ value for such words. Therefore, the question of closing the gap for $\rho(n)$ between our lower bound $\Omega(\log n)$ and the upper bound $\Oh(\log^2(n))$ remains open. 

It would be interesting to explore the question also for larger alphabets. Our preliminary experimental results on ternary alphabets indicate that the increase in $\rho$ happens at smaller lengths than for the binary case. This suggests that the effect we showed in this paper, of a divergence between the string's repetitiveness and $r$, may be even more pronounced in real-life applications.


\subsubsection*{Acknowledgements}
Zs.L.\ and M.S.\ wish to thank Dominik Kempa for getting them interested in the problem treated in this paper. We thank Gabriele Fici and Daniele Greco for interesting discussions, and Akihiro Nishi for preliminary experiments. We thank the Leibniz Zentrum f\"ur Informatik for the possibility of participating at  Dagstuhl Seminar no.\ 19241 in June 2019, where some of the authors started collaborating on this problem. 


\end{document}